\tikzstyle{node}=[fill=black, draw=black, shape=circle, scale=0.5]
\tikzstyle{wnode}=[fill=white, draw=black, shape=circle, scale=0.5]
\tikzstyle{textbox}=[inner sep=2pt, shape=rectangle, fill=none]
\tikzstyle{textnode}=[inner sep=0mm, shape=circle, fill=white]
\tikzstyle{gnode}=[inner sep=0mm, minimum size=1mm, fill={rgb,255: red,221; green,221; blue,221}, draw={rgb,255: red,221; green,221; blue,221}, shape=circle]
\tikzstyle{refine}=[fill=black, draw=black, shape=regular polygon, regular polygon sides=3, rotate=180, scale=0.5]
\tikzstyle{coarsen}=[fill=white, draw=black, shape=regular polygon, regular polygon sides=3, scale=0.5]
\tikzstyle{bdytextbox}=[fill=white, draw=black, shape=rectangle]
\tikzstyle{redbox}=[fill=white, draw=red, shape=rectangle, text=red]
\tikzstyle{bluecirc}=[inner sep=1mm, fill=white, draw={rgb,255: red,4; green,51; blue,255}, shape=circle, text={rgb,255: red,4; green,51; blue,255}]
\tikzstyle{rednode}=[fill=red, draw=red, shape=circle, scale=0.5]
\tikzstyle{new style 0}=[fill=white, draw=red, shape=circle, scale=0.5]
\tikzstyle{bluenode}=[fill={rgb,255: red,4; green,51; blue,255}, draw={rgb,255: red,4; green,51; blue,255}, shape=circle, scale=0.5]
\tikzstyle{blacksq}=[fill=black, draw=black, shape=rectangle, scale=0.5]
\tikzstyle{bluetext}=[fill=none, draw=none, shape=rectangle, text={rgb,255: red,4; green,51; blue,255}]
\tikzstyle{reg}=[draw, fill=white, rounded rectangle, rounded rectangle left arc=none, minimum height=1em, minimum width=1em, node font={\scriptsize}]
\tikzstyle{coreg}=[draw, fill=white, rounded rectangle, rounded rectangle right arc=none, minimum height=1em, minimum width=1em, node font={\scriptsize}]
\tikzstyle{edge}=[-, draw=black]
\tikzstyle{diredge}=[->, draw=black]
\tikzstyle{dashed edge}=[-, dashed, dash pattern=on 1pt off 1.5pt, draw=black]
\tikzstyle{dirdash}=[->, dashed, dash pattern=on 2pt off 0.5pt, draw=black]
\tikzstyle{mapsto}=[{|->}, draw=black]
\tikzstyle{gray diredge}=[draw={rgb,255: red,221; green,221; blue,221}, ->]
\tikzstyle{dark grey dirdash}=[->, dashed, dash pattern=on 2pt off 0.5pt, draw={rgb,255: red,81; green,81; blue,81}]
\tikzstyle{doubedge}=[-, draw=black, double=none, double distance=3pt, inner sep=0pt, thick]
\tikzstyle{thedge}=[-, thick, draw=black]
\tikzstyle{gray dashed}=[-, dashed, dash pattern=on 1pt off 1.5pt, draw={rgb,255: red,128; green,128; blue,128}]
\tikzstyle{rededge}=[-, draw=red]
\tikzstyle{gray edge}=[-, draw={rgb,255: red,128; green,128; blue,128}]
\tikzstyle{blthedge}=[-, thick, draw={rgb,255: red,4; green,51; blue,255}]
\tikzstyle{blthdash}=[-, dashed, dash pattern=on 1pt off 1.5pt, thick, draw={rgb,255: red,4; green,51; blue,255}]
\tikzstyle{resistor}=[R]
\tikzstyle{inductor}=[L]
\tikzstyle{capacitor}=[C]
\tikzstyle{voltage-source}=[american voltage source]
\tikzstyle{current-source}=[american current source]
\tikzstyle{dirrededge}=[draw=red, ->]
\tikzstyle{object}=[inner sep=0mm, shape=circle, fill=none]
\tikzstyle{bullet}=[fill=black, draw=black, shape=circle, scale=0.3]
\tikzstyle{circ}=[fill=white, draw=black, shape=circle, scale=0.3]
\tikzstyle{objectbox}=[inner sep=1pt, shape=rectangle, fill=none]
\tikzstyle{bdyobjectbox}=[fill=white, draw=black, shape=rectangle]
\tikzstyle{morphism}=[->, draw=black]
\tikzstyle{dash morphism}=[->, dashed, dash pattern=on 2pt off 0.5pt, draw=black]
\tikzstyle{mapsto}=[{|->}, draw=black]
\tikzstyle{nat transf}=[-implies, double, double distance=3pt, thick]
\tikzstyle{gray nat transf}=[-implies, draw=gray, double, double distance=3pt, thick]
\tikzstyle{equality}=[-, double, double distance=3pt]
\tikzstyle{squig morphism}=[rightsquigarrow, draw=black]
\tikzstyle{hookarrow}=[right hook->, draw=black]
\definecolor{electroblue}{RGB}{4,51,255}
\tikzstyle{vsource}=[rmeter, t={\textsf{\tiny -- +}}] 
\tikzstyle{ammeter}=[rmeter, t={\textsf{A}}] 
\tikzstyle{vmeter}=[rmeter, t={\textsf{V}}] 
\tikzstyle{elecdot}=[circle,fill,inner sep=0.85pt]
\newcommand{\includegraffle}[1]{
  {\lower10pt\hbox{$\includegraphics[height=1cm]{tikz/#1.pdf}$}}
}
\title{String Diagrams for Layered Explanations}
\author{Leo Lobski \qquad\qquad Fabio Zanasi
\institute{University College London}
}
\begin{document}
\maketitle

\begin{abstract}
We propose a categorical framework to reason about scientific \emph{explanations}: descriptions of a phenomenon meant to translate it into simpler terms, or into a context that has been already understood. Our motivating examples come from systems biology, electrical circuit theory, and concurrency. We demonstrate how three explanatory models in these seemingly diverse areas can be all understood uniformly via a graphical calculus of \emph{layered props}. Layered props allow for a compact visual presentation of the same phenomenon at different levels of precision, as well as the translation between these levels. Notably, our approach allows for \emph{partial explanations}, that is, for translating just one part of a diagram while keeping the rest of the diagram untouched. Furthermore, our approach paves the way for formal reasoning about counterfactual models in systems biology.
\end{abstract}

\mathversion{normal2}
\section{Introduction}\label{sec:introduction}

Different fields of science and engineering come with their own notions and traditions of explaining one phenomenon in terms of another one. For example, statistical mechanics explains thermodynamics, since it relies on fewer assumptions, which are moreover perceived as more fundamental than those of thermodynamics. A similar pattern may be found in the reduction of climate science to various areas of physics and biology. The converse move, from a ``lower'' to a ``higher'' level, is also interesting: for instance, temperature and vessel shape may be used to explain crystallisation. Choosing the right level of abstraction is paramount for successful communication between different disciplines, as well as between the scientific community and the general public. In particular, the definition of what constitutes an explanation is an increasingly important topic in the areas of automated reasoning and artificial intelligence~\cite{ExplainableAI}.

Perhaps the most drastic divide between different modes of explaining can be found in biology, where some phenomena are explained {\em mechanistically} (or {\em reductively}), that is, by reducing them to the underlying chemical or physical laws, while others are explained {\em functionally}, that is, by appealing to what an organism does as a part of a larger whole~\cite{krivine-siglog,rosen-life-itself}. For instance, when explaining production of ATP within a cell, the mitochondria can either be introduced as elementary blocks providing energy to the cell (functional), or as compartments containing a whole pathway to process ATP (mechanistic). This divide is not merely of conceptual interest, but has practical implications for the modelling of biological systems: the ability to replicate biological functions is taken as a measure of success of the rule based models \cite{signal-transduction,krivine-siglog}. However, the existing rule based languages that model molecular interactions are typically not able to formally distinguish between mechanistic and functional rules, as these exist at different levels of abstraction \cite{krivine-siglog}.

The goal of this work is to identify fundamental mathematical structures underlying explanations across different fields of science. Upon these structures, we develop a formalism that is able to describe the different levels of abstractions involved in an explanation, and account for more elaborate aspects such as the divide above. Additionally, we attempt to provide a uniform framework for {\em counterfactual reasoning} by allowing explanations that depend on what could potentially occur. Ability to model counterfactual dependencies is of interest in rule-based models of molecular interactions~\cite{counterfactual}. We shall illustrate our approach by showing how it models case studies in diverse scientific areas.\footnote{On the other hand, we do not delve into the philosophical ramifications of our approach. Rather, the aim is to offer an abstract formalisation of of existing intuitions, thus potentially providing precise tools for debating what a scientific explanation {\em should} be.}

In our framework, explanations always concern a certain {\em process}. The process can be thought of as an actually occurring natural phenomenon, or a computation, or a rule in some formal system. An explanation should then consist of another process whose level of abstraction is strictly lower than that of the process being explained. In addition to the lower level process, an explanation should state in what way the two processes are related, for example by giving a translation from one to the other. Moreover, we want the explanations to be {\em modular} or {\em compositional}, in the sense that the same explanation may be reused multiple times in case different systems have equivalent subsystems, and that the explanations can be composed to create larger, more complicated explanations. The reason for requiring modularity is twofold. First, it allows explanations to be reused by potentially different areas, in much the same way lemmas and theorems in mathematics are used to develop different theories. Second, this allows for a certain efficiency, as we may be interested in explaining only a part of a large system; in such a case modularity allows us to focus on this one part only, instead of explaining the whole system.

The above requirements for what an explanation should be like lead naturally to {\em monoidal categories}, as these allow for both sequential and parallel composition of processes (i.e.~morphisms in a category). We assume that the monoidal categories are partially ordered ``by abstraction'', so that more abstract theories (i.e.~categories) are higher in the order. We want to be able to compose not just the processes but also the explanations, so that we require the categories and functors under consideration to have a monoidal structure. We thus arrive at a $2$-category which is able to simultaneously talk about processes in all the individual categories ($0$-cells), translations between processes ($1$-cells), compositions of the processes and the translations, as well as rules or equations between the processes and the translations ($2$-cells). The definitions of an explanation (Definitions~\ref{def:explanation-1} and~\ref{def:explanation-2}) use all of this structure. This is the motivation for what we call a {\em layered prop} (Definition~\ref{def:layered-prop}).

It is worth noting that, in the categorical approaches inspired by the paradigm of functorial semantics, an explanation and what is being explained live in two separate categories, with some translation between them expressed as a functor --- see e.g.~\cite{compositional-networks,graphical-affine-algebra,electrical-circuits}. Within this perspective, some equality or relation in the domain is explained by passing to the codomain (or vice versa). Our framework allows to treat such situations in a single language, staying within one category. The main technical advantage of our approach is that partial interpretations are built into our language from the very beginning, potentially reducing the amount of computation that is needed. More conceptually, unlike in the functorial semantics approach, working in our framework allows for counterfactual reasoning: since we can mix-and-match categories and morphisms, this gives the flexibility to ask such questions as {\em What would happen if $p$ did not occur?}

Our contributions are organised as follows. In Section~\ref{sec:syntax} we define layered props and outline their connection to the so-called internal string diagram construction. Section~\ref{sec:semantics} briefly outlines how a layered prop can be interpreted in the bicategory of pointed profunctors. We give three definitions of an explanation in Section~\ref{sec:explanations}: one applies to $1$-cells, another one to $2$-cells, and the last one formalises counterfactual explanations. The remaining sections contain case studies formalised in our framework. Section~\ref{sec:glucose} shows an example involving biology and chemistry. Section~\ref{sec:electrical-circuits} shows the explanation of electrical circuit behaviour in terms of signal flow graphs --- as it draws from the circuit theory developed in~\cite{electrical-circuits,graphical-affine-algebra}, this example also clarifies how our approach relates to the `functorial semantics' approaches. Finally, Section~\ref{sec:concurrency} presents a case study from concurrency, involving the explanation of CCS expressions.

\section{Layered Props}\label{sec:syntax}

We shall build our language on \emph{string diagrammatic} syntax: the standard representation of morphisms in (strict) monoidal categories~\cite{selinger}. Algebraic reasoning on string diagrams is typically formulated using props (\textbf{pro}duct and \textbf{p}ermutation categories), which are just symmetric strict monoidal categories with the natural numbers as objects --- see e.g.~\cite{maclane-cat-algebra,lack-props,zanasi-thesis} for an overview. In fact, in order to model the different layers involved in an explanation, we will need a more sophisticated concept: {\em layered props}.

Since we want to talk about ``string diagrams in context'', the context being a theory at a particular level of abstraction, we draw the string diagrams inside a rectangle which represents its context. This allows us to reason both {\em internally} with the string diagrams, as well as {\em externally} by pasting and piling the rectangles. In order to formalise such graphical intuition as a layered prop, we need the preliminary notions of a {\em system of sets} and a {\em layered monoidal theory}.

We begin with systems of sets, which we think of as contexts and translations between them. Fix a collection of sets $\Omega$. An {\em $\Omega$-type} is a finite list of pairs $(\omega_1,\alpha_1;\dots;\omega_n,\alpha_n)$ where each $\omega_i$ is in $\Omega$ and each $\alpha_i\in\omega_i^*$ is an element in the free monoid on $\omega_i$. Precisely, define $\Omega$-types recursively~as:
\begin{itemize}[topsep=0pt,itemsep=-1ex,partopsep=1ex,parsep=1ex]
  \item the empty list $\varepsilon$ is an $\Omega$-type,
  \item if $t$ is a type, $\omega\in\Omega$, and $\alpha\in\omega^*$, then $(t; \ \omega,\alpha)$ is an $\Omega$-type.
\end{itemize}
We denote the collection of all $\Omega$-types by $\mathtt{type}^\Omega$.

We call $\Omega$ a {\em system of sets} when it is equipped with a partial order, and, for each comparable pair $\omega\leq\omega'$, with a choice of a homomorphism $f:\omega'^*\rightarrow\omega^*$. Intuitively, as we think of the sets $\omega\in\Omega$ as contexts, the partial order is saying which contexts are more abstract, and the homomorphisms are translations from more abstract contexts to less abstract ones. We now introduce the counterpart of algebraic theories for monoidal categories (typically called monoidal theories, see e.g.~\cite{zanasi-thesis}) based on this structure.
\begin{definition}[Layered monoidal theory]
  A {\em layered monoidal theory} is a tuple $(\Omega,\Sigma,\arity,\coarity)$ consisting of a system of sets $\Omega$, a set $\Sigma$ ({\em signature}), and functions $\arity,\coarity:\Sigma\rightarrow\mathtt{type}^\Omega$.
\end{definition}
It is convenient to introduce notation for the {\em internal signature} $\Sigma^i$, defined as
$$\Sigma^i\coloneqq\left\{\sigma\in\Sigma : \text{ there are }\omega\in\Omega\text{ and }\alpha,\beta\in\omega^* \text{ s.t. } \arity(\sigma)=(\omega,\alpha) \text{ and } \coarity(\sigma)=(\omega,\beta)\right\}.$$
The idea is that the generators in $\Sigma^i$ are completely contained in a single context $\omega$: there is no transition between contexts involved. We define the {\em terms} and the corresponding {\em sorts} (arity-coarity pairs of types $(t\mid s)$) of a layered monoidal theory by the recursive procedure in Figure~\ref{fig:terms}. For the $\otimes_\omega$-rule, there is a side condition that only the rules for $\Sigma^i$, identity, composition and $\otimes_{\omega}$ are used in constructing the terms $x$ and $y$. This ensures that $x$ and $y$ only contain generators from the internal signature, so that it makes sense to graphically represent the term $x\otimes_{\omega}y$ as juxtaposition of $x$ and $y$ inside the rectangle representing $\omega$. We call the terms that are generated using only these four rules {\em internal}. If a layered monoidal theory is generated by monoidal categories (see Section~\ref{subsec:from-monoidal} below), the internal terms will correspond precisely to morphisms inside the categories.
\bgroup
\def\arraystretch{1.75} 
\begin{figure}[h]
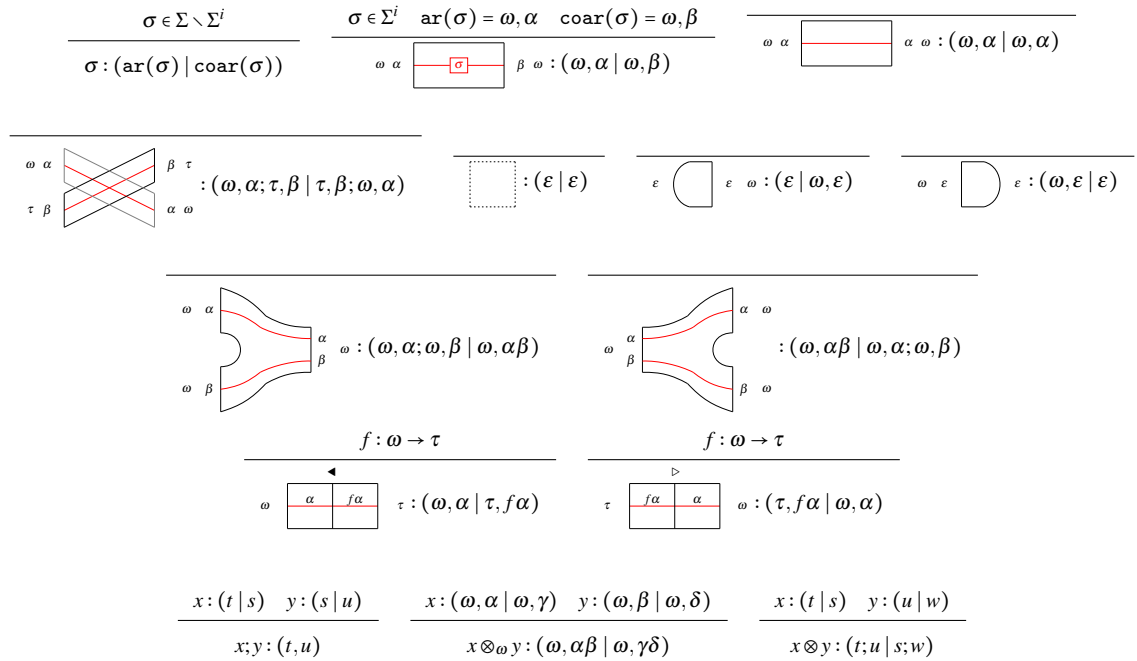

\scriptsize
\centering
\begin{tabular}{ c c c }
  \\
\begin{tabular}{ c }
$\sigma\in\Sigma\setminus\Sigma^i$ \\
\hline
$\sigma : (\arity(\sigma)\mid\coarity(\sigma))$
\end{tabular}
&
\begin{tabular}{ c }
$\sigma\in\Sigma^i\quad \arity(\sigma)=\omega,\alpha\quad \coarity(\sigma)=\omega,\beta$ \\
\hline
$\scalebox{.6}{\tikzfig{internalsigmadiag}} : (\omega,\alpha\mid\omega,\beta)$
\end{tabular}
&
\begin{tabular}{ c }
\hline
$\scalebox{.6}{\tikzfig{iddiag-sheet}} : (\omega,\alpha \mid \omega,\alpha)$
\end{tabular}
\end{tabular}
\begin{tabular}{ c c c c }
  \\
  \begin{tabular}{ c }
    \hline
    $\scalebox{.6}{\tikzfig{symdiag-sheet1}} : (\omega,\alpha ; \tau,\beta \mid \tau,\beta ; \omega,\alpha)$
    \end{tabular}
    &
\begin{tabular}{ c }
\hline
$\scalebox{.6}{\tikzfig{emptydiag}} : (\varepsilon\mid\varepsilon)$
\end{tabular}
&
\begin{tabular}{ c }
\hline
$\scalebox{.6}{\tikzfig{cup}} : (\varepsilon\mid\omega,\varepsilon)$
\end{tabular}
&
\begin{tabular}{ c }
\hline
$\scalebox{.6}{\tikzfig{cap}} : (\omega,\varepsilon\mid\varepsilon)$
\end{tabular}
\end{tabular}
\begin{tabular}{ c c }
  \\
\begin{tabular}{ c }
\hline
$\scalebox{.6}{\tikzfig{pants}} : (\omega,\alpha;\omega,\beta \mid \omega,\alpha\beta)$
\end{tabular}
&
\begin{tabular}{ c }
\hline
$\scalebox{.6}{\tikzfig{copants}} : (\omega,\alpha\beta \mid \omega,\alpha;\omega,\beta)$
\end{tabular}
\end{tabular}
\begin{tabular}{ c c }
\begin{tabular}{ c }
$f:\omega\rightarrow\tau$ \\
\hline
$\scalebox{.6}{\tikzfig{refine-sheet}} : (\omega,\alpha \mid \tau,f\alpha)$
\end{tabular}
&
\begin{tabular}{ c }
$f:\omega\rightarrow\tau$ \\
\hline
$\scalebox{.6}{\tikzfig{coarsen-sheet}} : (\tau,f\alpha \mid \omega,\alpha)$
\end{tabular}
\end{tabular}
\begin{tabular}{ c c c }
  \\
  \begin{tabular}{ c }
  $x : (t\mid s)$\quad $y : (s\mid u)$ \\
  \hline
  $x;y : (t,u)$
  \end{tabular}
&
\begin{tabular}{ c }
$x : (\omega,\alpha \mid \omega,\gamma)$\quad $y : (\omega,\beta \mid \omega,\delta)$ \\
\hline
$x\otimes_{\omega} y : (\omega,\alpha\beta \mid \omega,\gamma\delta)$
\end{tabular}
&
\begin{tabular}{ c }
$x : (t \mid s)$\quad $y : (u \mid w)$ \\
\hline
$x\otimes y : (t;u \mid s;w)$
\end{tabular}
\end{tabular}
\caption{Recursive construction of the terms of a layered monoidal theory. Each term of the sort $(\omega,\alpha \mid \tau,\beta)$ is drawn as an area connecting the type $\omega,\alpha$ on the left to the type $\tau,\beta$ on the right. The area inside a term, demarcated by black lines, is to be thought as representing the set $\omega$, and an internal red wire as $\alpha$ (the element of $\omega^*$). The change of type $\alpha\rightarrow\beta$ inside $\omega$ is drawn as a red box. The change of type at the level of sets $\omega\rightarrow\tau$ is drawn as a vertical black line. \label{fig:terms}}
\end{figure}
\egroup

We think of the {\em pants} and the {\em copants} (line 3 of Figure~\ref{fig:terms}) as composition and decomposition within a level of abstraction. The black and white triangles (line 4 of Figure~\ref{fig:terms}) are translations between the levels: $\refine$ translates an abstract layer to a more concrete one ({\em refinement}), while $\coarsen$ maps towards a higher abstraction ({\em coarsening}). In the pointed profunctor semantics (Section~\ref{sec:semantics}), pants will be interpreted as the monoidal product (seen as a profunctor), and copants as its adjoint profunctor (cf.~axioms in Figure~\ref{fig:axioms-pants2}). Likewise, refinement will be interpreted as a monoidal functor (seen as a profunctor), and coarsening as its adjoint (cf.~axioms in Figure~\ref{fig:axioms-pants3}).

In order to define a layered prop, we need to consider the terms modulo certain equations. Given $\omega\in\Omega$ and $\alpha,\beta\in\omega^*$, consider the internal terms with the sort $(\omega,\alpha\mid\omega,\beta)$. We may quotient this subset by the usual rules of monoidal categories: the identities and the monoidal unit are given by the third rule on the first line,
\begin{center}
\scalebox{.6}{\tikzfig{ids-unit-terms}}
\end{center}
while the monoidal product $\otimes_{\omega}$ is represented by vertical juxtaposition inside the $\omega$-rectangle. Further, we may quotient all the terms with the sort $(t\mid s)$ by the usual rules of symmetric monoidal categories: the identities are given by appropriate vertical juxtapositions of terms generated by the third rule on the first line, the monoidal unit is given by the second rule on the second line, and the monoidal product $\otimes$ is once again represented by vertical juxtaposition, this time of whole rectangles.

\begin{definition}[Layered prop]\label{def:layered-prop}
  A {\em layered prop} generated by a layered monoidal theory $(\Omega,\Sigma,\arity,\coarity)$ is a $2$-category whose $0$-cells are the types $\mathtt{type}^\Omega$ and whose $1$-cells $t\rightarrow s$ are terms with sort $(t\mid s)$ quotiented by the laws of symmetric monoidal categories both internally and externally, as discussed above. The $2$-cells are generated by the rules in Figures~\ref{fig:axioms-pants1}, \ref{fig:axioms-pants2} and~\ref{fig:axioms-pants3}. Where arrows are going in both directions, we require the 2-cells to be inverses. Further, we require the usual triangle identities to hold for each unit-counit pair in Figure~\ref{fig:axioms-pants2}, and the usual laws of monoidal categories to hold for the isomorphisms in Figure~\ref{fig:axioms-pants3}.
\end{definition}

\begin{figure}[h]
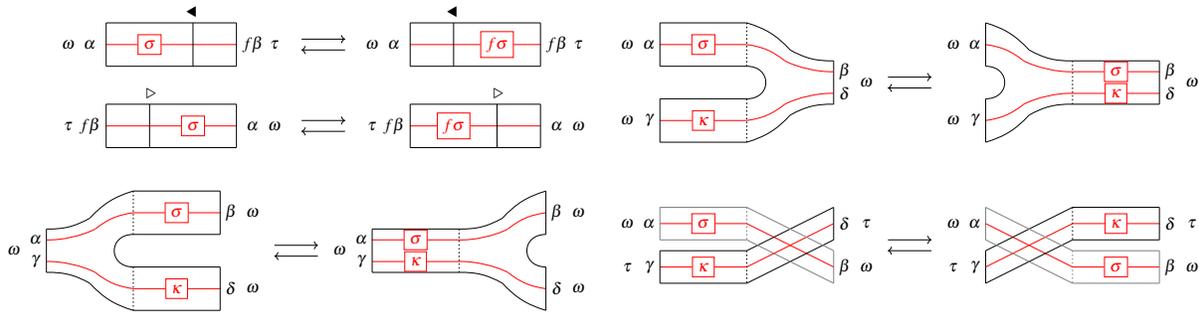

    \centering
    \resizebox{\textwidth}{!}{
        \tikzfig{profunctor-axioms}
    }
    \caption{2-cells of a layered prop expressing functoriality of refinement, coarsening, pants and copants.\label{fig:axioms-pants1}}
\end{figure}
\begin{figure}[h]
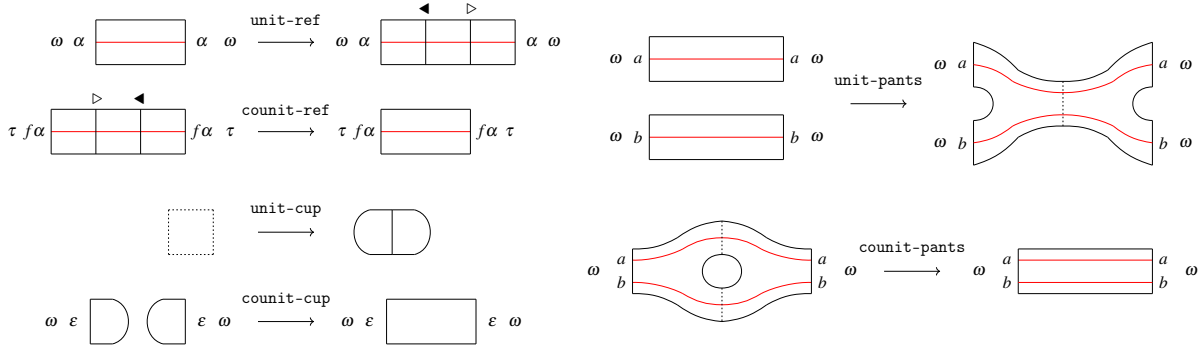

    \centering
    \resizebox{\textwidth}{!}{
        \tikzfig{axioms-pants2}
    }
    \caption{2-cells of a layered prop that exhibit pants-copants and refinement-coarsening as two adjoint pairs.\label{fig:axioms-pants2}}
\end{figure}
\begin{figure}[h]
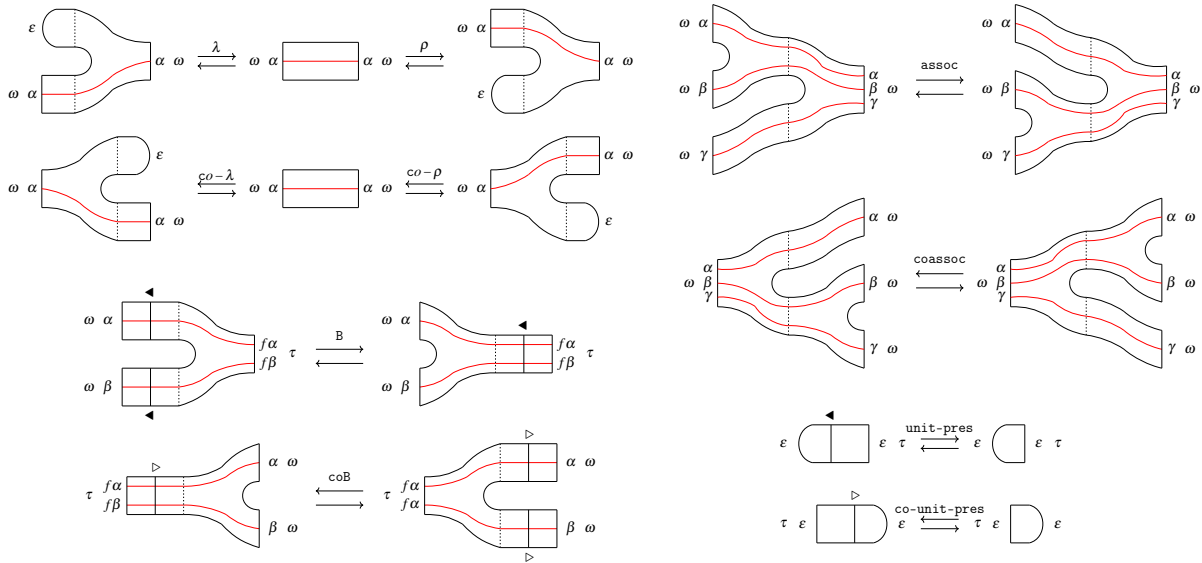

    \centering
    \resizebox{\textwidth}{!}{
        \tikzfig{monoidal-axioms}
    }
    \caption{2-cells of a layered prop that are motivated by monoidal categories and functors.\label{fig:axioms-pants3}}
\end{figure}
Note that the $1$-categorical structure of a layered prop can be seen as a generalisation of a coloured prop: any coloured prop gives rise to a layered prop with just one layer (i.e.~with just one set in $\Omega$). Furthermore, layered props are known in the literature as the {\em internal string diagram construction}. This was first introduced in the work of Bartlett, Douglas, Schommer-Pries and Vicary on topological quantum field theories~\cite{modular-categories}. The connection to profunctors is discussed by Hu~\cite{hu-thesis}.

\subsection{Layered Props from Monoidal Categories}\label{subsec:from-monoidal}
It is natural to build layered props from existing monoidal categories. In fact all the examples of layered props we consider arise in this way --- see Sections~\ref{sec:glucose}-\ref{sec:concurrency} below. We assume that instead of a system of sets, we have a system of {\em monoidal categories} $\Omega$ with monoidal functors instead of homomorphisms. The construction of the layered prop $\mathcal L(\Omega)$ then proceeds as before, taking the internal signature to contain all morphisms in each category in $\Omega$. We now proceed to define this formally.

A {\em system of monoidal categories} $\Omega$ is a subcategory of $\Cat$ such that
\begin{itemize}[topsep=0pt,itemsep=-1ex,partopsep=1ex,parsep=1ex]
\item every category $\omega\in\Omega$ is strict monoidal,
\item every functor in $\Omega$ is strict monoidal,
\item there is at most one functor between any pair of categories, that is, $\Omega$ is posetal.
\end{itemize}
The last condition is assumed merely for simplicity, we could construct a layered prop from $\Omega$ with multiple monoidal functors between a pair of monoidal categories. The formalism could be modified to incorporate non-strict monoidal categories, we leave this for future work.

We view a system of monoidal categories as a system of sets in a straightforward manner: the collection of sets is given by $\{\Ob(\omega) : \omega\in\Ob(\Omega)\}$, we identify $\alpha\beta\coloneqq\alpha\otimes\beta$ for all $\alpha,\beta\in\omega$ and $\omega\in\Omega$, we have $\omega\leq\omega'$ whenever there is a functor $f:\omega'\rightarrow\omega$, and the monoid homomorphisms are given by the restriction of each functor to objetcs.

Assuming that all the categories $\omega\in\Omega$ as well as $\Omega$ itself are small, we define the signature $\Sigma(\Omega)$ as follows:
$$\Sigma(\Omega)\coloneqq \left\{\sigma_{\omega}^{\alpha,\beta}\right\}_{\omega\in \Ob(\Omega),\alpha,\beta\in\Ob(\omega),\sigma:\alpha\rightarrow\beta}.$$
The arities and coarities are defined as:
\begin{align*}
\arity\left(\sigma_{\omega}^{\alpha,\beta}\right) &= \omega,\alpha &  \coarity\left(\sigma_{\omega}^{\alpha,\beta}\right) &= \omega,\beta.
\end{align*}
In other words, $\Sigma(\Omega)$ contains every morphism in every category $\omega\in\Omega$.

\begin{definition}[Layered prop generated by a system of monoidal categories]
  A {\em layered prop generated by a system of monoidal categories} $\Omega$ is the layered prop generated by the layered monoidal theory $(\Omega,\Sigma(\Omega),\arity,\coarity)$. Additional generators for $2$-cells are given by the equalities of morphisms in each category $\omega\in\Omega$.
\end{definition}
We denote the layered prop generated by a system of monoidal categories $\Omega$ by $\mathcal L(\Omega)$.

\mathversion{normal3}
\section{Pointed Profunctor Semantics}\label{sec:semantics}
While our framework is purely syntactic (indeed, the whole point of constructing a layered prop is that we are able to treat all the layers in the same language), we are able to provide a semantic justification for the layered prop formalism: as we show in this section, they can be naturally interpreted in the category of pointed profunctors $\Prof_*$. We include the Appendix~\ref{sec:profunctors} on profunctors and pointed profunctors as a quick reference and to disambiguate any notation. For a proper introduction, see Borceux~\cite{borceux-bicategories} and Loregian~\cite{loregian}, and references therein.

\begin{definition}\label{def:semantics}
  Let $\mathcal L$ be a layered prop. A {\em profunctor model} of $\mathcal L$ is a $2$-functor $\mathcal L\rightarrow\Prof_*$ which is consistent in the sense that
  \begin{itemize}[topsep=0pt,itemsep=-1ex,partopsep=1ex,parsep=1ex]
    \item if the $0$-cells $(\omega,\alpha)$ and $(\omega,\beta)$ are respectively mapped to $(\cat C,c)$ and $(\cat D,d)$, then $\cat C=\cat D$,
    \item if the $1$-cells \scalebox{.6}{\tikzfig{internalsigmadiag}} and \scalebox{.6}{\tikzfig{internalsigmadiag-prime}} are respectively mapped to $(P,f)$ and $(Q,g)$, then $P=Q$.
  \end{itemize}
\end{definition}

For the rest of this section, we assume that $\Omega$ is a system of monoidal categories. We will show that there is a natural profunctor model of the layered prop generated by $\Omega$. To this end, we wish to define a $2$-functor $\mathcal I:\mathcal L(\Omega)\rightarrow\Prof_*$.

Let us define $\mathcal I$ on objects (i.e.~$\Omega$-types) recursively as follows: $\mathcal I(\varepsilon)\coloneqq (\one,\bullet)$, and $\mathcal I(t; \omega,\alpha)\coloneqq \mathcal I(t)\times (\omega,\alpha)$. In order to define $\mathcal I$ on morphisms, for each $\omega\in\Omega$, let us write $I_{\omega} : \one\rightarrow\omega$ for the functor sending the unique object of $\one$ to the monoidal unit of $\omega$. Likewise, let us write $\mathfrak s : \cat C\times\cat D\rightarrow\cat D\times\cat C$ for the symmetry map in $\Cat$. Note that since $\mathfrak s$ is an isomorphism, we have $\embeddn {\mathfrak s}\simeq\embedup {\mathfrak s}$. We then define $\mathcal I$ by the following action on the generators:
\bgroup
\def\arraystretch{1.85} 
\begin{center}
\scriptsize
\begin{tabular}{ c c c | c c c }
\scalebox{.6}{\tikzfig{internalsigmadiag}} & $\mapsto$ & $\sigma$ & \scalebox{.6}{\tikzfig{emptydiag}} & $\mapsto$ & $\id_{(\one,\bullet)}$ \\
\scalebox{.6}{\tikzfig{cup}} & $\mapsto$ & $(\embedup {I_{\omega}}, \id_I)$ & \scalebox{.6}{\tikzfig{cap}} & $\mapsto$ & $(\embeddn {I_{\omega}}, \id_I)$ \\
\scalebox{.6}{\tikzfig{refine-sheet}} & $\mapsto$ & $(\embedup f, \id_{f\alpha})$ & \scalebox{.6}{\tikzfig{coarsen-sheet}} & $\mapsto$ & $(\embeddn f, \id_{f\alpha})$ \\
\scalebox{.6}{\tikzfig{pants}} & $\mapsto$ & $(\embedup {\otimes_{\omega}}, \id_{(\alpha\otimes\beta)})$ & \scalebox{.6}{\tikzfig{copants}} & $\mapsto$ & $(\embeddn {\otimes_{\omega}}, \id_{(\alpha\otimes\beta)})$ \\
\scalebox{.6}{\tikzfig{symdiag-sheet1}} & $\mapsto$ & $(\embedup {\mathfrak s}, \id_{(\beta,\alpha)})$ & $x;y$ & $\mapsto$ & $\mathcal I(y)\circ\mathcal I(x)$ \\
$x\otimes_{\omega} y$ & $\mapsto$ & $\mathcal I(x)\otimes_{\omega}\mathcal I(y)$ & $x\otimes y$ & $\mapsto$ & $\mathcal I(x)\times\mathcal I(y)$.
\end{tabular}
\end{center}
\egroup
where $\embedup -$ and $\embeddn -$ are the covariant and contravariant embeddings of $\Cat$ in the category of profunctors, and $\sigma$ stands for the pointed profunctor $(\hom_{\omega},\sigma)$. We prove the following proposition in Appendix~\ref{sec:proofs}.
\begin{proposition}\label{prop:monoidal-prof-model}
The assignment $\mathcal I$ is a profunctor model of $\mathcal L(\Omega)$. Namely, it preserves the equalities of morphisms in each category $\omega\in\Omega$ as well as the rules in Figures~\ref{fig:axioms-pants1}, \ref{fig:axioms-pants2} and~\ref{fig:axioms-pants3}.
\end{proposition}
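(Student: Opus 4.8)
The plan is to verify the three obligations that make $\mathcal I$ a profunctor model: that it is well defined as a $2$-functor on the quotiented $1$-cells, that it satisfies the two consistency clauses of Definition~\ref{def:semantics}, and that it sends each generating $2$-cell of Figures~\ref{fig:axioms-pants1}--\ref{fig:axioms-pants3}, together with the equalities of morphisms internal to each $\omega$, to a valid equality of $2$-cells in $\Prof_*$. The consistency clauses and the internal equalities are immediate from the recursive definition: two $0$-cells sharing the set $\omega$ are sent to pointed categories with the same underlying category, two internal generators over $\omega$ are sent to pointed profunctors sharing the underlying hom-profunctor $\hom_\omega$, and if $\sigma=\sigma'$ in $\omega$ then $(\hom_\omega,\sigma)=(\hom_\omega,\sigma')$ on the nose. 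Well-definedness on $1$-cells is similarly built in, since $\mathcal I$ is defined compositionally through $\circ$, $\otimes_\omega$ and $\times$, matching the symmetric monoidal bicategory structure of $\Prof_*$. Thus the real work is the preservation of the $2$-cell axioms.

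The backbone of the argument is the standard theory of the embeddings $\embedup{-},\embeddn{-}:\Cat\to\Prof$. I would recall that $\embedup{-}$ is the covariant pseudofunctor sending $F:\cat C\to\cat D$ to the profunctor $\cat D(-,F{-})$ and $\embeddn{-}$ the contravariant one sending $F$ to $\cat D(F{-},{-})$, and that for each $F$ there is an adjunction $\embedup F\dashv\embeddn F$ in $\Prof$. Figure~\ref{fig:axioms-pants1} (functoriality of refinement, coarsening, pants and copants) is then handled by sending each functoriality $2$-cell to the corresponding comparison isomorphism of the pseudofunctor, namely $\embedup{G\circ F}\simeq\embedup G\circ\embedup F$ and $\embedup{\id}\simeq\id$, with the contravariant statements for $\embeddn{-}$; pants and copants are the embeddings of the monoidal product $\otimes_\omega:\omega\times\omega\to\omega$, and the cup and cap those of $I_\omega:\one\to\omega$, so their functoriality is an instance of the same fact. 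In each case the point component is built from identity morphisms, so the pointed equalities reduce to their unpointed counterparts.

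For Figure~\ref{fig:axioms-pants2} I would invoke the fundamental adjunction $\embedup F\dashv\embeddn F$ directly: the unit and counit $2$-cells of the layered prop are sent to the unit and counit of this adjunction, taking $F=\otimes_\omega$ in the pants--copants case and $F=f$ in the refinement--coarsening case, and the triangle identities required in Definition~\ref{def:layered-prop} hold in $\Prof_*$ because they already hold in $\Prof$. Since every distinguished element appearing in these units and counits is an identity morphism, the pointed triangle identities collapse to the underlying ones. Figure~\ref{fig:axioms-pants3} is where strictness does the work: because each $\omega\in\Omega$ is strict monoidal, $\otimes_\omega$ is strictly associative and unital, so the relevant equations for pants and copants hold as genuine equalities of profunctors rather than up to coherence isomorphism; for instance the associativity law for pants reduces, after applying $\embedup{-}$, to the strict equality $\otimes_\omega\circ(\otimes_\omega\times\id_\omega)=\otimes_\omega\circ(\id_\omega\times\otimes_\omega)$ of functors $\omega\times\omega\times\omega\to\omega$. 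Because each functor in $\Omega$ is strict monoidal, the squares expressing compatibility of refinement and coarsening with $\otimes_\omega$ and with the units $I_\omega$ hold on the nose as well; I would write out one such square and note that the remaining equations are verified identically.

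The main obstacle I anticipate is the bookkeeping of the pointed layer: at every step I must check that the second component of each pointed profunctor matches after composition, which forces me to track how the chosen points transport along the unit, counit, and comparison data. Relatedly, the embeddings $\embedup{-},\embeddn{-}$ are a priori only pseudofunctorial, so before any equation can be asserted as a strict equality I must confirm that the relevant comparison isomorphisms are identities --- this is precisely where the strictness of the monoidal categories and functors in $\Omega$ is indispensable, and the one place where the argument would break down if strictness were dropped.
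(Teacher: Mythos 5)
Your treatment of Figure~\ref{fig:axioms-pants2} (send each unit/counit to the unit/counit of the adjunction $\embedup F\dashv\embeddn F$) and of Figure~\ref{fig:axioms-pants3} (strictness of the monoidal structures in $\Omega$ plus monoidality of the embeddings, which the paper isolates as Proposition~\ref{prop:monoidal-2-functors}) coincides with the paper's proof. The genuine gap is in your handling of Figure~\ref{fig:axioms-pants1}. Those $2$-cells are \emph{not} the comparison cells $\embedup{G\circ F}\simeq\embedup{G}\circ\embedup{F}$ and $\embedup{\id}\simeq\id$ of the pseudofunctor $\embedup{-}$: they are the sliding (naturality) rules that move an internal morphism $\sigma:\alpha\rightarrow\beta$ of a layer $\omega$ through a refinement, coarsening, pants or copants wall, turning it into $f\sigma$ on the other side. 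In $\Prof_*$ they read
$$(\embedup f,\id_{f\beta})\circ(\hom_\omega,\sigma)\;\simeq\;(\hom_\tau,f\sigma)\circ(\embedup f,\id_{f\alpha}),$$
and dually for $\embeddn f$ (with $f=\otimes_\omega$ in the pants/copants case). Here the distinguished points are emphatically not all identities: the two composites are pointed by the coend classes $[\sigma,\id_{f\beta}]$ and $[\id_{f\alpha},f\sigma]$, and the entire content of the axiom is that these classes coincide. Your blanket claim that ``the point component is built from identity morphisms, so the pointed equalities reduce to their unpointed counterparts'' is false precisely for these rules, so the central step of the proof is absent --- and it is exactly the ``bookkeeping of the pointed layer'' that your closing paragraph identifies as the main obstacle without resolving it.

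For the record, the paper closes this step by an explicit coend computation: under the isomorphisms
$$\int^{\gamma\in\omega}\omega(\alpha,\gamma)\times\tau(f\gamma,f\beta)\simeq\tau(f\alpha,f\beta),\qquad[h,k]\mapsto k\circ fh,$$
and
$$\int^{\gamma\in\tau}\tau(f\alpha,\gamma)\times\tau(\gamma,f\beta)\simeq\tau(f\alpha,f\beta),\qquad[h,k]\mapsto k\circ h,$$
both $[\sigma,\id_{f\beta}]$ and $[\id_{f\alpha},f\sigma]$ evaluate to $f\sigma$, which is where functoriality of $f$ and dinaturality of the coend actually do the work. A secondary, smaller inaccuracy: preservation of the equalities internal to $\omega$ is not ``on the nose'' as you assert, since such an equality generally relates \emph{composites}, and $\mathcal I$ sends a composite to a composite of pointed hom-profunctors rather than to a single one; the paper needs Proposition~\ref{prop:embed-omega} (pseudofunctoriality of $c\mapsto(\cat C,c)$, i.e.\ $(\hom_\omega,\sigma_2)\circ(\hom_\omega,\sigma_1)\simeq(\hom_\omega,\sigma_2\sigma_1)$ via $[h,k]\mapsto k\circ h$) before equal composites can be seen to have isomorphic images.
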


\section{Explanations}\label{sec:explanations}

Using the formalism introduced in the previous sections, we are now able to formulate precisely the notion of an explanation. First, we give names to two special shapes of $1$-cells in a layered prop and outline their connection to explanations. We assume that we are working with a layered prop generated by a system of monoidal categories $\Omega$.

\begin{definition}[Window, cowindow]\label{def:window}
  A {\em window} is a morphism in a layered prop of the form on the left below.   Dually, a {\em cowindow} is a morphism in a layered prop of the form on the right below.
  \begin{center}
  \scalebox{.6}{\tikzfig{window}} \qquad \qquad   \scalebox{.6}{\tikzfig{cowindow}}
  \end{center}
\end{definition}

Windows correspond to {\em reductive} explanations: a process at the higher level gets translated to the lower level, where we can apply laws or rules that are (presumably) more flexible, after which we translate back to the higher level, hence completing the explanation. This remark should be compared to the shape of the explanation of glucose phosphorylation in Section~\ref{sec:glucose} below.

Cowindows, in turn, correspond to {\em functional} explanations: a process at the lower level is justified by passing through a higher level in such a way that the higher level process translates back to what is being explained. It can thus be thought that the lower level process takes place in order to yield the appropriate form at the higher level. Axioms \texttt{unit-ref} and \texttt{counit-ref} state that there is an asymmetry between reductive and functional explanations: using \texttt{unit-ref}, it is always possible to create a window (and hence give a reductive explanation), while \texttt{counit-ref} only allows reducing a trivial cowindow to the identity. Note that what we call a cowindow is usually called a {\em functorial box} in the literature --- see e.g.~\cite{functorial-boxes}.

We may now define what an explanation is: we do this separately for $1$-cells and for $2$-cells. Both correspond to a {\em reduction}: an explanation of a $1$-cell reduces a process to another one at a lower level of abstraction, while an explanation of a $2$-cell reduces a rule between two processes to a rule between reductions of these processes. For examples of explanations (now in a formal sense), see Figure~\ref{fig:glucose} and the discussion in Section~\ref{sec:concurrency} ($1$-cell), and Figure~\ref{fig:resistors} ($2$-cell).

\begin{definition}[Explanation of a $1$-cell]\label{def:explanation-1}
  Let $\mathfrak e$ and $\sigma$ be parallel $1$-cells in a layered prop (that is, having the same domain and codomain). We say that $\mathfrak e$ is an {\em explanation} of $\sigma$ if
  \begin{enumerate}[topsep=0pt,itemsep=-1ex,partopsep=1ex,parsep=1ex]
    \item $\sigma$ is an internal morphism contained in some category $\omega\in\Omega$,
    \item every internal non-identity morphism of $\mathfrak e$ is contained in some category $\omega'$ such that $\omega' < \omega$ in the partial order of $\Omega$,
    \item there is either a $2$-cell $\mathfrak e\rightarrow\sigma$ or a $2$-cell $\sigma\rightarrow\mathfrak e$.
  \end{enumerate}
\end{definition}

\begin{definition}[Explanation of a $2$-cell]\label{def:explanation-2}
  Let $\eta$ and $\mu$ be parallel $2$-cells in a layered prop. We say that $\eta$ is an {\em explanation} of $\mu$ if
  \begin{enumerate}[topsep=0pt,itemsep=-1ex,partopsep=1ex,parsep=1ex]
    \item $\mu$ is generated by an equality of morphisms in some category $\omega\in\Omega$,
    \item $\eta$ can be constructed using the generating $2$-cells of a layered prop and the $2$-cells that come from an equality of morphisms in those categories $\omega'$ for which $\omega' < \omega$ in the partial order of $\Omega$.
  \end{enumerate}
\end{definition}

The above definitions correspond to the intuitive understanding of a (reductive) explanation we outlined in Section~\ref{sec:introduction}. The first condition in both definitions ensures that what is being explained is internal to a particular category, that is, to a description at a particular level of abstraction. The second condition says that the explanation is indeed reductive: it may only use lower levels of description than what is being explained (in addition to the metalanguage of the layered prop). This implies that an explanation must contain at least one window. The third condition in the first definition ensures that the explanation is {\em relevant} in the sense that it is either a sufficient or a necessary cause for what is being explained. There is no such condition for the explanations of $2$-cells since in our setup there are no $3$-cells. We thus simply assume that an explanation is relevant. This assumption need not be made if we are working with higher categories. These definitions can be dualised, this gives definitions of functional explanations, or ``coexplanations''. We will not need these in this work, and therefore omit the explicit statements.

Interestingly, if we require that the third condition of Definition~\ref{def:explanation-1} does not hold (i.e.~there is no $2$-cell between the explanation and what is being explained), we obtain the definition of a {\em counterfactual explanation}. Ability to model counterfactual reasoning is important for the causal analysis in the rule-based models of molecular interactions, such as the Kappa language~\cite{counterfactual}. While a particular simulation of a rule-based model may tell us that a rule $\mathfrak e$ was invoked in the computation of the effect $\sigma$, so that $\mathfrak e$ explains $\sigma$ in the sense of Definition~\ref{def:explanation-1}, this tells nothing about necessity (or sufficiency) of $\mathfrak e$ for $\sigma$. Thus a rule-based model is not (without modifications) able to deal with such questions as {\em Would $\sigma$ occur had $\mathfrak e$ not occurred?} In a layered prop, the positive answer to such question (establishing non-necessity) can be provided by finding a counterfactual explanation of $\sigma$ that has the same sort as $\mathfrak e$. Intuitively, a (possibly) counterfactual explanation can be thought of as a $1$-cell that ``fills in the gap'' left by $\mathfrak e$:
\begin{center}
\scalebox{.6}{\tikzfig{counterfactual}}
\end{center}
We give an example of a counterfactual explanation in our discussion of concurrency in Section~\ref{sec:concurrency}.

Models based on variable substitution~\cite{pearl-causality} and trajectory sampling~\cite{counterfactual} have been proposed to model counterfactual statements. Since our setup remains agnostic about what the internal morphisms in a layered prop actually are, we expect that both of these situations can be modelled within a layered prop. We leave this investigation for future work.

\mathversion{normal4}
\section{Example: Glucose Phosphorylation}\label{sec:glucose}

In this section, we construct a minimal example --- inspired by Krivine~\cite{krivine-talk} --- that illustrates our notion of an explanation (specifically, Definition~\ref{def:explanation-1}) for an important biochemical process known as {\em phosphorylation of glucose}. This is motivated by the problem of systematising a vast amount of experimental data in systems biology in a way that is easy for humans to both understand and use. Our strategy is to define three monoidal categories that are capable of talking about chemical reactions at three different abstraction levels:
\begin{center}
\begin{tabular}{ c | c }
$\Lplus$ & English names of the relevant molecules \\
\hline
$\Molp$ & Molecules \\
\hline
$\PartMolp$ & Partitions of molecules into smaller units
\end{tabular}
\end{center}

First, let us define $\Lplus$ as the free monoidal category with generating objects
$$\{\glucose, \ATP, \glucosesix, \ADP, \hydrogenion\},$$
whose monoidal product is denoted by $+$, and with just one generating morphism
\begin{equation}\label{eqn:phosphorylation}
\glucose+\ATP \longrightarrow \glucosesix+\ADP+\hydrogenion.
\end{equation}
The generating morphism simply represents the high-level chemical rule describing phosphorylation of glucose. Here $\ATP$ and $\ADP$ stand for {\em adenosine triphosphate} and {\em adenosine diphosphate}.

\subsection{Molecules and Molecule Partitions}
We define a {\em molecule partition} as a certain connected multigraph (Definition~\ref{def:molpart}). We then identify as {\em molecules} those molecule partitions that do not have free variables. Fix a countable set of {\em free variables} $\FW$. We denote the elements of $\FW$ by lowercase Greek letters $\alpha,\beta,\gamma,\dots$. Let us define the set of {\em atoms} as containing the symbol for each main-group element of the periodic table together with the symbols $-$ and $+$: $\At\coloneqq\{-,+,H,C,O,P,\dots\}$. Define the function $\mathbf v:\At\sqcup\FW\rightarrow\N$ as taking each element symbol to the valence of that element\footnote{This is a bit of a naive model, as valence is in general context-sensitive and not determined by a single atom. Yet this is good enough for the purposes of this example.}, define $\mathbf v(-)=\mathbf v(+)=1$ and finally for all $\alpha\in\FW$ let $\mathbf v(\alpha)=1$.

\begin{definition}[Molecule partition]\label{def:molpart}
A {\em molecule partition} is a triple $(V,\tau,m)$, where $V$ is a finite set of {\em vertices}, $\tau:V\rightarrow\At\sqcup\FW$ is a function taking each vertex to its {\em type} and $m:V\times V\rightarrow\N$ is a function satisfying the following conditions:
\begin{itemize}[topsep=0pt,itemsep=-1ex,partopsep=1ex,parsep=1ex]
\item for all $v\in V$, we have $m(v,v)=0$,
\item for all $v,w\in V$, we have $m(v,w)=m(w,v)$,
\item for all $v,u\in V$ with $v\neq u$, there are $w_0,\dots,w_n\in V$ such that $w_0=v$ and $w_n=u$ and $m(w_{i-1},w_i)\neq 0$ for each $i=1,\dots,n$,
\item for all $v\in V$, we have $\sum_{u\in V}m(u,v)=\mathbf v\tau(v)$.
\end{itemize}
In other words, the integers $m(i,j)$ form an adjacency matrix of an irreflexive, symmetric and connected multigraph, and the sum of each row or column gives the valence of the (type of) corresponding vertex.
\end{definition}

\begin{definition}[Molecule]
We say that a molecule partition $(V,\tau,m)$ is a {\em molecule} if the image of the function $\tau$ is contained in $\At$.
\end{definition}

We denote the set of all molecules by $\Mol$ and the set of all molecule partitions by $\PartMol$. Define the {\em partitioning relation} $R\sse\PartMol\times(\PartMol\times\PartMol)$ as follows. Let $M=(V,\tau,m)$ be a molecule partition, let $u,v\in V$ and let $\alpha\in\FW$. Denote by $m':V\times V\rightarrow\N$ the function such that $m'(u,v)=m'(v,u)=0$ and $m'=m$ otherwise. Suppose that the following conditions are satisfied:
\begin{enumerate}[topsep=0pt,itemsep=-1ex,partopsep=1ex,parsep=1ex]
\item $m(u,v)=1$,
\item the graph $(V,m')$ is not connected,
\item $\alpha$ does not appear as a free variable in $M$ (that is, $\tau(w)\neq\alpha$ for all $w\in V$).
\end{enumerate}
In such case we denote by $V(u)$ and $V(v)$ the connected components of $u$ and $v$, respectively, in $(V,m')$. Let $M_u^{\alpha}=(V(u)\sqcup\{\alpha\},\tau_{\alpha},m_u)$ be the molecule partition where $\tau_{\alpha}(\alpha)=\alpha$ and $\tau_{\alpha}=\tau$ otherwise, and $m_u(u,\alpha)=m_u(\alpha,u)=1$ and $m_u=m$ otherwise. The molecule partition $M_v^{\alpha}=(V(v)\sqcup\{\alpha\},\tau_{\alpha},m_v)$ is defined similarly. Now we finally define $R$ by stipulating that $MR(M_u^{\alpha},M_v^{\alpha})$ for all $M$, $v$, $u$ and $\alpha$ that satisfy the above conditions.

Let us define $\Molp$ as the free monoidal category with generating objects $\Mol$ and just one generating morphism, which has the same shape as the generating morphism of $\Lplus$~\eqref{eqn:phosphorylation}, except that all the English names of the molecules are translated to the corresponding graphs (see Figure~\ref{fig:translation}). Similarly, define $\PartMolp$ as the free monoidal category with generating objects $\PartMol$. For all variables $\alpha$ and $\beta$ we add the rule
\begin{center}
\scalebox{.6}{\tikzfig{chem-rule}}
\end{center}
as a generating morphism to $\PartMolp$. We draw this as a box: \scalebox{.6}{\tikzfig{Apartmol}}. Further, for all molecule partitions $M,N$ and $K$ such that $MR(N,K)$ we introduce the following generators
\begin{center}
\scalebox{.6}{\tikzfig{choose-join-red}}
\end{center}
We now wish to define monoidal functors $\Lplus\xrightarrow T\Molp\xhookrightarrow{i}\PartMolp$ so as to make this into a system of monoidal categories. First, define a monoidal functor $T:\Lplus\rightarrow\Molp$ by the action on the generating objects in Figure~\ref{fig:translation}, where we use the convention from chemistry that an unlabelled vertex represents a carbon atom with an appropriate number of hydrogen atoms attached to it to make its valence equal to $4$. The only generating morphism of $\Lplus$ is mapped to the only generating morphism of $\Molp$. The monoidal functor $\Molp\xhookrightarrow{i}\PartMolp$ is identity on objects and maps the only generating morphism of $\Molp$ to the composite morphism in the middle rectangle of Figure~\ref{fig:glucose}.
\begin{figure}
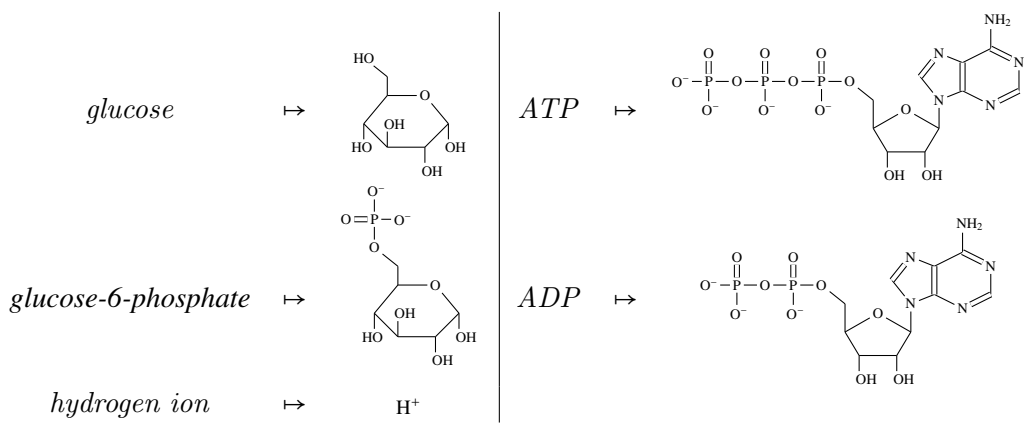

\centering
\begin{tabular}{ c c c | c c c }
$\glucose$ & $\mapsto$ & \scalebox{.5}{\tikzfig{glucose}} & $\ATP$ & $\mapsto$ & \scalebox{.5}{\tikzfig{ATP}} \\
$\glucosesix$ & $\mapsto$ & \scalebox{.5}{\tikzfig{glucosesix}} & $\ADP$ & $\mapsto$ & \scalebox{.5}{\tikzfig{ADP}} \\
$\hydrogenion$ & $\mapsto$ & \scalebox{.7}{\tikzfig{H}} & & &
\end{tabular}
\caption{Translation of English names to chemical graphs.\label{fig:translation}}
\end{figure}

\subsection{Explaining Phosphorylation}
We can now use the lowest level language $\PartMolp$ to explain the high-level rule~\eqref{eqn:phosphorylation} as is shown in Figure~\ref{fig:glucose}. Note that this is indeed an explanation according to Definition~\ref{def:explanation-1}, since the rule that is being explained is internal to $\Lplus$, the explanation does not use any non-identity morphisms from $\Lplus$, and the explanation can be derived starting from the rule~\eqref{eqn:phosphorylation} using the $2$-cells of the layered prop, whose composite gives a $2$-cell from the rule to the explanation.
\begin{figure}[h]
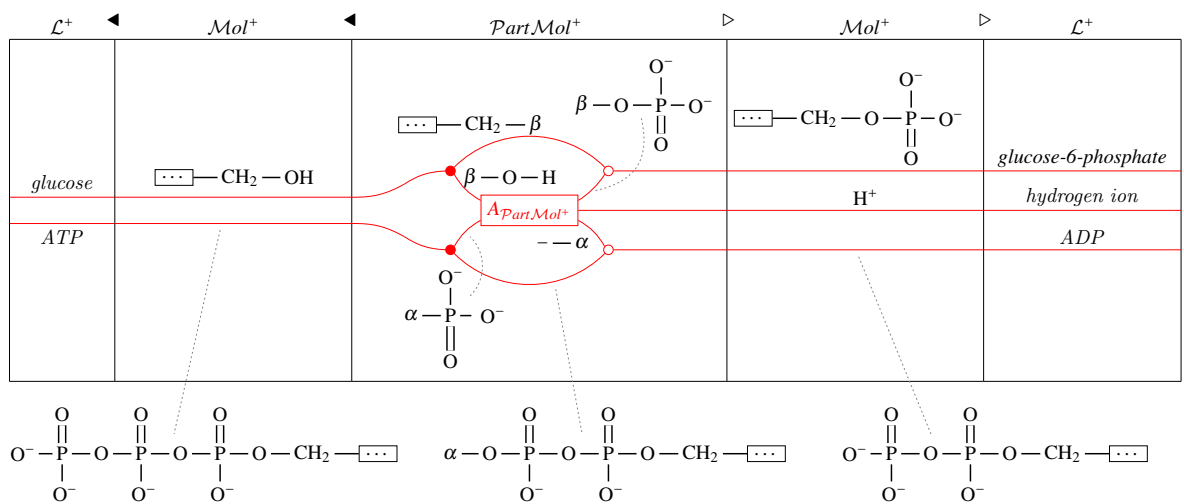

    \centering
    \scalebox{0.7}{
        \tikzfig{phosphorylation-sheet}
    }
    \caption{Explaining glucose phosphorylation: each area between the vertical black bars represents a layer, so in this case $\Lplus$, $\Molp$ or $\PartMolp$.\label{fig:glucose}}
\end{figure}
While the diagram in Figure~\ref{fig:glucose} fulfills the definition of an explanation, it is not very ``explanatory" in an intuitive sense. This is because we chose to stop at a fairly high level of abstraction. It is important to note that the morphism $A_{\PartMolp}$ is just a black box, which could itself be explained at the level of atoms exchanging electrons. Modularity of layered props would then allow us to add this further level to the diagram. The resulting explanation would bring us closer to satisfactorily answering the question {\em Why does this reaction occur?}.

We conclude this example by remarking that we didn't have to assume that we already know the higher level chemical rule~\eqref{eqn:phosphorylation}. Instead we could have chosen to {\em generate} the higher level rules by declaring as morphisms every $1$-cell from $(\Lplus,c)$ to $(\Lplus,d)$ for some objects $c,d\in\Lplus$. Instead of an explanation, this would correspond to deriving higher level rules from a single lower level rule.

\section{Example: Electrical Circuits}\label{sec:electrical-circuits}

While in the previous section we constructed a minimal example from scratch, in this section we take an existing example from the literature where explanations are already used implicitly. Namely, we focus on the research program that has formalised electrical circuits in terms of string diagrams and given them an interpretation in graphical affine algebra~\cite{compositional-networks,graphical-affine-algebra,survey-signal-flow,electrical-circuits}.

The string diagrammatic electrical circuit theory is a paradigmatic example of explanations taking a functorial form: the relations between electrical components are proved by interpreting them as morphisms in the graphical affine algebra. Thus this example also shows how functorial explanations can be incorporated into our framework. Note, however, that Boisseau and Soboci\' nski~\cite{electrical-circuits} already use something like layered explanations to only partially translate their diagrams. They call the notational device used for this an {\em impedance box}. In our language, impedance boxes arise in a principled way as instances of a general definition: they are just windows (Definition~\ref{def:window}) of a particular shape.

We define the props {\em graphical affine algebra} $\GAA$ and {\em electrical circuits} $\ECirc$ as well as the translation functor $\mathcal I:\ECirc\rightarrow\GAA$ as in~\cite{electrical-circuits}, except that we quotient the morphisms in $\ECirc$ by equality under $\mathcal I$. This makes $\mathcal I$ faithful, which we reflect in our syntax by adding a left inverse to the $2$-cell \texttt{unit-ref} in Figure~\ref{fig:axioms-pants2}\footnote{This causes some problems for the semantic interpretation of Section~\ref{sec:semantics}, whose resolution we leave for a more technical paper.}. Additionally, we define the {\em impedance category} $\Imp$ and the category of bipoles $\Bip$ in order to express the impedance calculus of~\cite{electrical-circuits} formally within our setup.
\begin{definition}[Impedance category]
  The {\em impedance category} $\Imp$ is a prop whose generating morphisms are all the morphisms of $\GAA$ with exactly one input and exactly one output. The identity is \scalebox{.6}{\tikzfig{imp-id}}, and composition is given by the rule
  \begin{center}
  \scalebox{.6}{\tikzfig{imp-comp}}.
  \end{center}
\end{definition}
\begin{definition}[Bipole category]
  The {\em bipole category} $\Bip$ is the subcategory of $\ECirc$ given by those generators which have exactly one input and one output. That is, it is the free prop generated by
  \begin{center}
    \begin{tabular}{ c | c | c | c | c }
  \scalebox{0.6}{\tikzfig{resistor}} & \scalebox{0.6}{\tikzfig{inductor}} & \scalebox{0.6}{\tikzfig{capacitor}} & \scalebox{0.6}{\tikzfig{voltage-source}} & \scalebox{0.6}{\tikzfig{current-source}}.
\end{tabular}
\end{center}
\end{definition}

Define the ``boxing" functor $B:\Bip\rightarrow\Imp$ by the following action on the generators:
\begin{center}
\scalebox{.6}{\tikzfig{boxing}}
\end{center}
Further, define a "wrapping" functor $W:\Imp\rightarrow\GAA$ which acts as $n\mapsto 2n$ on objects and on morphisms as shown below left. The boxing and the wrapping functors are so defined that we have a commutative square below right:
\begin{center}
\scalebox{.6}{\tikzfig{wrapping}}\qquad\qquad\qquad\scalebox{.8}{\tikzfig{elcirc-decomp}}
\end{center}
where the top horizontal morphism is the inclusion functor, and $\mathcal I$ is the translation of electrical circuits to graphical affine algebra. Treating the above diagram of monoidal functors as a system of monoidal categories, we obtain a layered prop. Within this layered prop, we are able to replicate what is called the {\em impedance calculus} in~\cite{electrical-circuits}. To illustrate this, we give an explanation of the rule governing the sequential composition of resistors. This rule is a $2$-cell in the layered prop, and the explanation is therefore that of a $2$-cell (Definition~\ref{def:explanation-2}).

Figure~\ref{fig:resistors} shows how the rule for composing two resistors
\begin{center}
\scalebox{0.6}{\tikzfig{resistors-high}}
\end{center}
can be explained (this is essentially part~(i) of Proposition~3 of~\cite{electrical-circuits}). This is indeed an explanation of a $2$-cell (Definition~\ref{def:explanation-2}), since we are explaining an equality in $\ECirc$ using only the $2$-cells of a layered prop and a $2$-cell from $\Imp$ (the third $2$-cell of the derivation).
\begin{figure}[h]
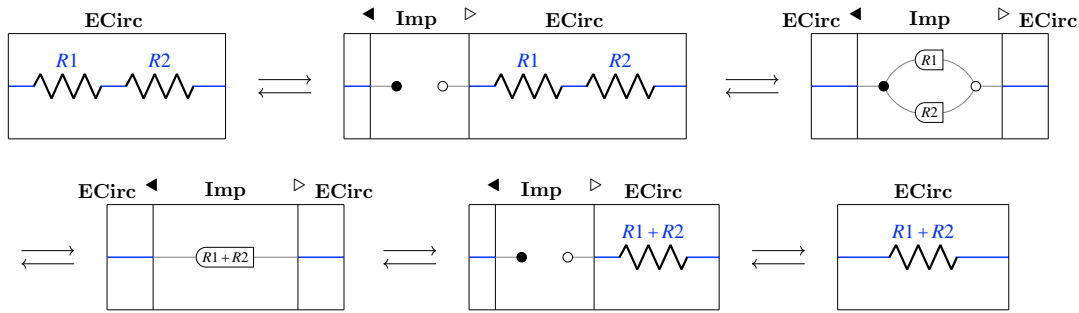

    \centering
    \scalebox{0.7}{
        \tikzfig{resistors-add}
    }
    \caption{Explaining sequential composition of resistors. Note that the explanation relies on the composition in $\Imp$. This could, in turn, be itself explained by translating to $\GAA$.\label{fig:resistors}}
\end{figure}

As for the example with glucose phosphorylation, we could choose to generate the equalities in $\ECirc$ rather than assume them a priori. In this case, there would be no need to quotient morphisms in $\ECirc$ by equality under the translation functor, yet the equality of $1$-cells should be taken up to a trivial window.

\section{Example: Calculus of Communicating Systems}\label{sec:concurrency}

The calculus of communicating systems (CCS)~\cite{milner} is widely used to reason about programs, formal languages and concurrency. Here we consider a restricted version of CCS and two ways to give semantics to the CCS expressions: {\em reduction semantics} is very heavily syntactic, in addition to the structural congruences, it only allows for only one rewrite rule (the {\em reduction}), while the {\em labelled transition system} (LTS) semantics~\cite{milner} is more flexible and comes with more rewrite rules. Our goal is to show how the LTS semantics can be used to give an explanation (this time in the sense of Definition~\ref{def:explanation-1}) of the rewrite rule of the reduction semantics. Intuitively, the LTS semantics may be seen as a lower level implementation of the concurrent processes described abstractly by CCS. Furthermore, we demonstrate that LTS semantics has a larger scope of allowed derivations than the reduction semantics by giving a counterfactual explanation of a rewrite rule in the reduction semantics.

Let us fix a set of {\em action names} $A$. Define $\bar A\coloneqq\{\bar a : a\in A\}$ and $Act\coloneqq A\cup\bar A\cup\{\tau\}$. The set of {\em processes} is defined recursively as follows, where $x$ ranges over $Act$:
\begin{center}
\begin{tabular}{ c c | c | c}
  $P\Coloneqq$ & $0$ & $x.P$ & $P\parallel P$.
\end{tabular}
\end{center}

\begin{definition}[Congruence]
  Define the {\em congruence} as the smallest equivalence relation $\sim$ on the set of processes that satisfies:
  \begin{center}
  \begin{tabular}{ c c c }
    $P\parallel Q\sim Q\parallel P$, & \quad & $(P\parallel Q)\parallel R\sim P\parallel (Q\parallel R)$, \\ \\
    $0\parallel P\sim P$, & \quad & if $P\sim P'$ and $Q\sim Q'$, then $P\parallel Q\sim P'\parallel Q'$.
  \end{tabular}
\end{center}
\end{definition}

\begin{definition}[Reduction semantics]
  A {\em rewrite rule} in {\em reduction semantics} is an ordered pair of processes, which we write as $P\rightarrow Q$, generated by the following three {\em deduction rules}:
  \begin{center}
  \begin{tabular}{ c c c }
    \begin{tabular}{ c }
      \\
      \hline
      $x.P\parallel\bar x.Q\rightarrow P\parallel Q$
    \end{tabular}
    &
    \begin{tabular}{ c }
      $P\rightarrow Q$ \\
      \hline
      $P\parallel R\rightarrow Q\parallel R$
    \end{tabular}
    &
    \begin{tabular}{ c }
      $P\rightarrow Q$\quad $P\sim P'$\quad $Q\sim Q'$ \\
      \hline
      $P'\rightarrow Q'$
    \end{tabular}
  \end{tabular}
\end{center}
\end{definition}

In other words, rewrite rules are parallel compositions of the {\em reduction} (first rule in the above definition) up to the congruence. For instance, we can derive the following rewrite rule:
\begin{equation}\label{eqn:CCS-derivation}
x.0\parallel (y.0\parallel\bar x.0)\rightarrow 0\parallel (y.0\parallel 0).
\end{equation}

In order to talk about layered props, we wish to express reduction semantics as a monoidal category. Let $\Red$ be the monoidal category whose objects are the processes, monoidal product on objects is the parallel composition $\parallel$, and whose morphisms are generated by:
\begin{center}
\scalebox{.6}{\tikzfig{generators-CSS}}
\end{center}
together with inverses for the first four generators. Here $P$, $Q$ and $R$ range over processes, and $x$ ranges over $A$. The first four morphisms correspond to the congruence, and $R$ corresponds to the first deduction rule for transitions. The parallel composition is taken care of by the monoidal structure. Note that the monoidal product is not strictly associative, so we need to keep track of the bracketing of the wires.

Next, we introduce a LTS as an alternative semantics for the above fragment of CCS.
\begin{definition}[Labelled transition]
  A {\em labelled transition} is a triple $(P,x,Q)$, where $P$ and $Q$ are processes and $x\in Act$, generated by the deduction rules below. We write $P\xrightarrow x Q$ for such triple. Note that we write the silent action $\tau$ as an unlabelled arrow.
  \begin{center}
  \begin{tabular}{ c c c c }
    \begin{tabular}{ c }
      $P'\xrightarrow x P$ \\
      \hline
      $P'\parallel Q\xrightarrow x P\parallel Q$
    \end{tabular}
    &
    \begin{tabular}{ c }
      $P'\xrightarrow x P$ \\
      \hline
      $Q\parallel P'\xrightarrow x Q\parallel P$
    \end{tabular}
    &
    \begin{tabular}{ c }
      \\
      \hline
      $x.P\xrightarrow x P$
    \end{tabular}
    &
    \begin{tabular}{ c }
      $P'\xrightarrow x P$\quad $Q'\xrightarrow{\bar x}Q$ \\
      \hline
      $P'\parallel Q'\rightarrow P\parallel Q$
    \end{tabular}
  \end{tabular}
\end{center}
\end{definition}
\begin{definition}[Bisimulation]
  A {\em bisimulation} on the set of processes is a binary relation $b$ such that for all processes $P$ and $Q$ and all $x\in Act$, we have that $PbQ$ implies
  \begin{itemize}[topsep=0pt,itemsep=-1ex,partopsep=1ex,parsep=1ex]
    \item if $P\xrightarrow x P'$, then there is a process $Q'$ such that $Q\xrightarrow x Q'$ and $P'bQ'$,
    \item if $Q\xrightarrow x Q'$, then there is a process $P'$ such that $P\xrightarrow x P'$ and $P'bQ'$.
  \end{itemize}
  The {\em largest bisimulation} is the union of all bisimulations.
\end{definition}

Labelled transitions define the {\em LTS semantics}, which, similarly to the reduction semantics, can be modelled as a monoidal category. Thus let $\LTS$ be the free monoidal category whose generating objects are pairs $P,\uparrow x$, where $P$ is a process and $x\in Act$. We think of $\uparrow x$ as the ``pending action", and omit the silent pending action: $P\coloneqq P,\uparrow{\tau}$. The morphisms of $\LTS$ are generated by
\begin{center}
\scalebox{.6}{\tikzfig{generators-LTS}}
\end{center}
where $P$ and $Q$ range over processes, $x\in Act$ and $y\in A\cup\bar A$ and we identify $\bar{\bar y}\coloneqq y$. The structural isomorphisms of the monoidal category have the same form as the structural isomorphisms of $\Red$, and correspond to the largest bisimulation. The other morphisms in $\LTS$ model those rewrite rules of the usual LTS semantics that are derivable via our restricted set of deduction rules.

There is a monoidal functor $I:\Red\rightarrow\LTS$, whose action on objects is defined as $0\mapsto 0,\uparrow{\tau}$, $x.P\mapsto P,\uparrow{\tau}$ and $P\parallel Q\mapsto (I(P),I(Q))$. For morphisms, $I$ takes each structural isomorphism in $\Red$ to the corresponding isomorphism in $\LTS$, and the morphism $R$ to
\begin{center}
\scalebox{.6}{\tikzfig{CSS-R-map}},
\end{center}
where the dots refer to an appropriate decomposition of $P$ and $Q$ into $I(P)$ and $I(Q)$.

We use the functor $I$ to view the LTS semantics as the lower level language that explains the reduction semantics. For instance, we can explain the rewrite rule~\eqref{eqn:CCS-derivation} by just moving its derivation in $\Red$
\begin{center}
\scalebox{.6}{\tikzfig{CSS-red}}
\end{center}
through the window, that is, essentially by applying $I$. In this case, we are also able to give a counterfactual explanation:
\begin{center}
\scalebox{.6}{\tikzfig{CSS-LTS-2}}.
\end{center}
The above diagram is indeed a counterfactual explanation (see the discussion in Section~\ref{sec:explanations}) of the rewrite rule~\eqref{eqn:CCS-derivation}: (1) the rewrite rule is an internal morphism in $\Red$, (2) every non-identity internal morphism in the diagram is contained in $\LTS$, which is strictly below $\Red$ in the partial order of the layered prop, (3) there are no $2$-cells between the rewrite rule and the diagram. To see that (3) is indeed the case, note that there are in fact no $2$-cells having the above diagram as either domain or codomain (one can see this by going through the generators of $2$-cells of a layered prop one by one).

The fact that there is a counterfactual explanation of the rewrite rule~\eqref{eqn:CCS-derivation} shows that it is not {\em necessary} to invoke the (analogue of) rule $R$ in its derivation at the level of LTS. This observation allows us to show neatly that LTS semantics is more flexible than the reduction semantics, in the sense that there are more derivations of the same transitions. Note that the counterfactual explanation does not need to be more complex than an ordinary explanation: in this case it is in fact more direct, in the sense that it shows that there is an actual labelled transition, while the explanation obtained by translating the diagram in $\Red$ merely shows that there is a labelled transition up to the largest bisimulation.

\section{Conclusions and Future Work}
We have taken the first steps towards developing a mathematical framework for formalising explanations. Explanations in a category theoretic context usually take the form of a functor, whose domain is thought of as syntax and codomain as semantics. Our approach differs from this: in a layered prop, there are several possible translations to different levels, which are nonetheless syntactically represented in the same language (that is, within one layered prop). A layered prop allows one to easily work with different theories describing the same phenomenon, and, importantly, allows for partial translations instead of having to translate the full diagram, as we have illustrated with the examples. We have also observed how counterfactual processes arise naturally within layered props: these are those processes that ``look like'' a translation without being one. Furthermore, the examples show that the same abstract principles hold in areas as distant as biology, electrical circuit theory and concurrency theory. Layered props can thus indeed be conceived as the initial stage of a general mathematical theory of explanations.

On the mathematical level, the next phase of developing the theory is to explore the precise connection of layered props to pointed profunctors. Currently, there is a canonical $2$-functor which translates a layered prop generated by a system of monoidal categories to the category of pointed profunctors which preserves the axioms of a layered prop. One way to proceed would be to characterise the image of this functor, thus identifying a subcategory of $\Prof_*$ to which a given layered prop is equivalent. Another mathematical aspect that is important for practical applications is to modify the definition of a layered prop to allow for non-strictly associative monoidal categories, as for instance described diagrammatically in~\cite{non-strict-monoidal}. As briefly remarked in Section~\ref{sec:electrical-circuits}, the current semantics cannot adequately handle the important special case when the translation functor is faithful. This suggests that the current interpretation of the $2$-cells as natural transformations is too restrictive, and some other notion of $2$-cells for pointed profunctors should be used. In order to connect layered props to known structures, it would also be useful to express them as a Grothendieck construction.

Even though it was beyond the scope of this paper, we believe it is important to connect our work with the philosophy of science literature on explanations. Since the initial motivation for our work comes from biology, it is particularly interesting to see how ideas on explanations and causality in biology fit our framework. For instance, one of the main motivations of Robert Rosen for introducing the theoretical framework of {\em relational biology} was to put the {\em function} of an organism on equal grounding with the {\em mechanism} that underlies it \cite{rosen-life-itself}. This can be modelled within a layered prop: reductive and functional explanations are {\em a priori} completely symmetric, and in any case equally well-defined.

Several systems with multiple layers are known in the applied category theory literature. In addition to the already discussed~\cite{non-strict-monoidal} and~\cite{electrical-circuits} (Section~\ref{sec:electrical-circuits}), we mention the formalism of {\em hierarchical petri nets}~\cite{hierarchical-petri-nets}, and Rom\' an's notion of an {\em open diagram}~\cite{roman}. All of these rely on an intuitive notion composing processes at different levels, and hence we plan to explore them using layered props.

\paragraph{Acknowledgements} The credit for the original idea for the ``calculus of refinement and coarsening'', as well as for the examples featuring chemical reactions and CCS goes to Jean Krivine. We also thank him for feedback and inspiring discussions during various stages of this work. We thank Samson Abramsky for discussing and giving feedback on ideas that lead to this work. We thank Cole Comfort for conversations and for pointing us towards the literature on internal string diagrams. LL thanks Jamie Vicary, Nick Hu, Alex Rice, Calin Tataru and Ioannis Markakis for an opportunity to present and discuss an early version of this work.

\nocite{*}
\bibliographystyle{eptcs}
\bibliography{bibliography}

\appendix
\mathversion{normal5}
\section{Profunctors and Pointed Profunctors}\label{sec:profunctors}

In order to fix notational conventions, we recall the standard definition of profunctors. We also define the not-so-standard category of pointed profunctors. We state the results about (pointed) profunctors needed in the main body of the paper, mostly without proof.

\subsection{Profunctors}
We follow Loregian~\cite{loregian} in our discussion of profunctors and coends.
\begin{definition}[Bicategory of profunctors]
Define the bicategory of {\em profunctors} $\Prof$ as follows.
\begin{itemize}[topsep=0pt,itemsep=-1ex,partopsep=1ex,parsep=1ex]
\item the $0$-cells are (small) categories,
\item the $1$-cells, denoted by $\cat C\srarrow\cat D$, are functors
$$\cat C^{op}\times\cat D\rightarrow\Set,$$
\item the $2-cells$ are natural transformations $\alpha:F\Rightarrow G$,
\item the composition
$$c_{\cat A,\cat B,\cat C} : \Prof(\cat A,\cat B)\times\Prof(\cat B,\cat C)\rightarrow\Prof(\cat A,\cat C)$$
takes profunctors $F:\cat A\srarrow\cat B$ and $G:\cat B\srarrow\cat C$ to the coend $G\circ F=\int^{B}F(-,B)\times G(B,=)$. Explicitly, we define
$$(G\circ F)(A,C)\coloneqq\int^{B\in\cat B}F(A,B)\times G(B,C).$$
\end{itemize}
\end{definition}

There is a bifunctor
$$\times : \Prof\times\Prof\rightarrow\Prof$$
defined as the product functor of $n$-cells for each $n=0,1,2$ which equips $\Prof$ with a symmetric monoidal structure.

Given a 2-category $\cat K$, let us write $\cat K^{op}$ for the 2-category whose 0-cells and 2-cells are those of $\cat K$ and whose 1-cells are the reversed 1-cells of $\cat K$, that is, for all 0-cells $A$ and $B$ we have $\cat K^{op}(A,B)=\cat K(B,A)^{op}$. Similarly, we write $\cat K^{co}$ for the 2-category whose 0-cells and 1-cells are those of $\cat K$ and whose 2-cells are the reversed 2-cells of $\cat K$, that is, for all 0-cells $A$ and $B$ we have $\cat K^{co}(A,B)=\cat K(A,B)^{op}$.

There are two ways to embed the 2-category $\Cat$ into $\Prof$: one is contravariant on the 1-cells, the other on the 2-cells. Both embeddings are identity on objects. The embedding
$$\embedup{-} : \Cat^{co}\rightarrow\Prof$$
takes a functor $F:\cat C\rightarrow\cat D$ to the profunctor $\embedup F:\cat C\srarrow\cat D$ defined on objects by $\embedup F(C,D)\coloneqq\cat D(FC,D)$, and a natural transformation $\eta:F\rightarrow G$ to the natural transformation $\embedup G\rightarrow\embedup F$ whose $(C,D)$-component is given by $-\circ\eta_C$.

Dually, the embedding
$$\embeddn - : \Cat^{op}\rightarrow\Prof$$
takes a functor $F:\cat C\rightarrow\cat D$ to the profunctor $\embeddn F:\cat D\srarrow\cat C$ defined on objects by $\embeddn F(C,D)\coloneqq\cat D(D,FC)$, and a natural transformation $\eta:F\rightarrow G$ to the natural transformation $\embedup F\rightarrow\embedup G$ whose $(C,D)$-component is given by $\eta_C\circ -$.

Both $\embedup -$ and $\embeddn -$ are 2-functors, locally fully faithful and for every functor $F$ the 1-cell $\embedup F$ is the left adjoint to $\embeddn F$ in the bicategory $\Prof$ (see section~5.1 of Loregian~\cite{loregian} for the details).

\begin{proposition}\label{prop:monoidal-2-functors}
  Both $\embedup - : \Cat^{co}\rightarrow\Prof$ and $\embeddn - : \Cat^{op}\rightarrow\Prof$ are monoidal 2-functors.
\end{proposition}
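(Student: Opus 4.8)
The plan is to exhibit, for each embedding, the comparison data of a monoidal $2$-functor and then observe that this data is canonical, so that every coherence condition holds automatically. Recall that the monoidal product on $\Prof$ is computed componentwise: on $1$-cells $F:\cat A\srarrow\cat B$ and $G:\cat C\srarrow\cat D$ it is the profunctor $(F\times G)((A,C),(B,D))=F(A,B)\times G(C,D)$. The observation that drives everything is that the hom-functor of a product category factors as a product of hom-functors, so that for functors $F:\cat A\to\cat B$ and $G:\cat C\to\cat D$ one has
$$\embedup{F\times G}((A,C),(B,D))=(\cat B\times\cat D)\big((FA,GC),(B,D)\big)=\cat B(FA,B)\times\cat D(GC,D),$$
which is precisely $(\embedup F\times\embedup G)((A,C),(B,D))$. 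This identification supplies the invertible comparison $2$-cell $\phi_{F,G}:\embedup F\times\embedup G\Rightarrow\embedup{F\times G}$; under the standard definition of the product category it is in fact an identity. Since $\embedup{-}$ is the identity on $0$-cells, it sends the monoidal unit $\one$ of $\Cat$ to the monoidal unit $\one$ of $\Prof$, giving the trivial unit comparison.

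First I would verify that the family $\phi_{F,G}$ assembles into a (pseudo)natural transformation between the two evident composites $\Cat^{co}\times\Cat^{co}\to\Prof$. Naturality in the $2$-cell direction is immediate from the explicit description of $\embedup{-}$ on a natural transformation $\eta$, namely the component $-\circ\eta_C$, since this acts componentwise on a product. Naturality in the $1$-cell direction requires comparing the two ways of whiskering $\phi$ with a pair of functors; this reduces to the compatibility of profunctor composition with the componentwise product, which holds because the cartesian product $-\times S$ in $\Set$ preserves colimits in each variable and hence commutes with the coend that defines composition in $\Prof$.

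Next I would check the coherence axioms: the associativity constraint and the two unitality constraints for a monoidal $2$-functor. Because the structure maps $\phi_{F,G}$ and the unit comparison are exactly the canonical isomorphisms witnessing associativity and unitality of the product of hom-sets, each coherence diagram commutes by the corresponding coherence for the cartesian monoidal structures on $\Cat$ and on $\Set$. The argument for $\embeddn{-}:\Cat^{op}\to\Prof$ is identical; the only change is the variance, which yields
$$\embeddn{F\times G}((B,D),(A,C))=\cat B(B,FA)\times\cat D(D,GC),$$
again the componentwise product $\embeddn F\times\embeddn G$, so the same canonical comparisons apply.

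The main obstacle is purely in the bookkeeping of the naturality square in the $1$-cell direction, where one must track the coherence isomorphisms of profunctor composition. Once it is recognised that products in $\Set$ commute with the relevant coends, every comparison $2$-cell is canonical and the remainder is a routine coherence verification rather than a genuine computation.
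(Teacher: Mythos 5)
Your proposal is correct and follows essentially the same route as the paper's proof: the core of both arguments is the observation that the hom-functor of a product category factors as a product of hom-functors, giving $\embedup{F\times G}=\embedup F\times\embedup G$ (and dually for $\embeddn{-}$), together with the componentwise identity $\embedup{\eta\times\mu}=\embedup{\eta}\times\embedup{\mu}$ on $2$-cells. The only difference is that you additionally spell out the pseudonaturality and coherence bookkeeping (including the fact that cartesian products commute with the coends defining profunctor composition), which the paper leaves implicit.
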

\begin{proof}
  We prove the result for $\embedup -$: the argument for $\embeddn -$ is dual.

  Since the embedding is identity on objects, the monoidal product of 0-cells (which is just the cartesian product of categories) is preserved.

  For 1-cells, let $F:\cat C\rightarrow\cat D$ and $G:\cat C'\rightarrow\cat D'$ be functors. We wish to show that $\embedup{F\times G}\simeq\embedup F\times\embedup G$. We compute as follows:
  \begin{align*}
    \embedup{F\times G}(C,C';D,D') &= \cat D\times\cat D'(F\times G(C,C'), (D,D')) \\
                                   &= \cat D\times\cat D'((FC,GC'), (D,D')) \\
                                   &= \cat D(FC,D)\times\cat D'(GC',D') \\
                                   &= \embedup F(C,D)\times\embedup G(C',D') \\
                                   &= \left(\embedup F\times\embedup G\right)(C,C';D,D'),
  \end{align*}
  whence it follows that $\embedup{F\times G}$ and $\embedup F\times\embedup G$ agree on objects. The fact that they agree on morphisms is a similar computation.

  For 2-cells, let $F,F':\cat C\rightarrow\cat D$ and $G,G':\cat C'\rightarrow\cat D'$ all be functors. Given natural transformations $\eta:F\rightarrow F'$ and $\mu:G\rightarrow G'$, we wish to show that $\embedup{\eta\times\mu}=\embedup{\eta}\times\embedup{\mu}$. This follows by observing that their components coincide:
  \begin{equation*}
    \embedup{\eta\times\mu}_{(C,C';D,D')} = -\circ (\eta\times\mu)_{(C,C')}
                                          = (-\circ\eta_C)\times (-\circ\mu_{C'})
                                          = \embedup{\eta}_{C,D}\times\embedup{\mu}_{C',D'}
                                          = \left(\embedup{\eta}\times\embedup{\mu}\right)_{C,C';D,D'}.
  \end{equation*}
\end{proof}

\subsection{Pointed Profunctors}
\begin{definition}[Pointed profunctors]
Define the bicategory of {\em pointed profunctors} $\Prof_*$ as follows:
\begin{itemize}[topsep=0pt,itemsep=-1ex,partopsep=1ex,parsep=1ex]
\item the $0$-cells are pairs $(\cat C, c)$ of a (small) category $\cat C$ and an object $c\in\Ob(\cat C)$,
\item the $1$-cells $(P,f) : (\cat C, c)\rightarrow (\cat D, d)$ consist of a profunctor $P : \cat C\srarrow\cat D$, that is, a functor
$$P : \cat C^{op}\times\cat D\rightarrow\Set,$$
together with an element $f\in P(c,d)$,
\item the $2$-cells $\alpha : (P,f) \rightarrow (Q,g)$ are natural transformations $\alpha : P\Rightarrow Q$ such that $\alpha_{c,d}(f)=g$,
\item the composition of $(P,f) : (\cat C, c)\rightarrow (\cat D, d)$ and $(Q,g) : (\cat D, d)\rightarrow (\cat E, e)$ is given by $(Q\circ P, [f,g])$, where $\circ$ is the composition of profunctors and $[f,g]$ the equivalence class of the pair $(f,g)$ in $(Q\circ P)(c,e)$.
\end{itemize}
\end{definition}

Note that a pointed hom-functor $(\cat C(-,-), f) : (\cat C, c)\rightarrow (\cat C, c')$ is precisely a morphism $f:c\rightarrow c'$. Thus we will simply write the hom-functor $(\cat C(-,-), f)$ as $f$. For a category $\cat C$, we define an assignment
\begin{align*}
z_{\cat C} : \cat C &\rightarrow \Prof_* \\
c &\mapsto (\cat C, c) \\
(f : c\rightarrow c') &\mapsto (f : (\cat C, c)\rightarrow (\cat C, c'))
\end{align*}
\begin{proposition}\label{prop:embed-omega}
The assignment $z_{\cat C}$ is a pseudofunctor (when $\cat C$ is taken to have the trivial bicategory structure).
\end{proposition}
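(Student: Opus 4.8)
The plan is to verify directly that $z_{\cat C}$ satisfies the data and coherence laws required of a pseudofunctor out of a \emph{locally discrete} bicategory. Since $\cat C$ (with its trivial bicategory structure) carries only identity $2$-cells, the action of $z_{\cat C}$ on each hom-category is automatically functorial, and the only genuine content is to produce invertible \emph{compositor} and \emph{unitor} $2$-cells in $\Prof_*$ and to check that they cohere.

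First I would exhibit the unitor. By the remark preceding the statement, the identity $1$-cell on $(\cat C, c)$ in $\Prof_*$ is the pointed hom-profunctor $(\cat C(-,-), \id_c)$, which is exactly $z_{\cat C}(\id_c)$. Hence the unitor may be taken to be the identity $2$-cell, and $z_{\cat C}$ is in fact \emph{normal}, preserving identity $1$-cells on the nose. Next I would construct the compositor. Given $f : c \to c'$ and $g : c' \to c''$, the composite $z_{\cat C}(g)\circ z_{\cat C}(f)$ in $\Prof_*$ has underlying profunctor $\cat C(-,-)\circ\cat C(-,-)$, whose value at $(c, c'')$ is the coend $\int^{x}\cat C(c,x)\times\cat C(x,c'')$, pointed by the class $[f,g]$. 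Since $\cat C(-,-)$ is the identity $1$-cell of $\cat C$ in $\Prof$, the density (coYoneda) isomorphism supplies a natural iso $\cat C(-,-)\circ\cat C(-,-)\xrightarrow{\sim}\cat C(-,-)$ sending $[u,v]\mapsto v\circ u$; in particular it carries the point $[f,g]$ to $g\circ f$. This iso therefore respects points and defines an invertible $2$-cell $z_{\cat C}(g)\circ z_{\cat C}(f)\Rightarrow z_{\cat C}(g\circ f)$ in $\Prof_*$, as required.

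Finally I would check the two coherence conditions. The unit conditions are immediate, since the unitor is the identity and the density iso restricts correctly along identities (this is the standard fact that the left and right unitors of $\Prof$ agree at the identity $1$-cell). For associativity one must show that the pentagon relating the two bracketings of $z_{\cat C}(h)$, $z_{\cat C}(g)$, $z_{\cat C}(f)$ commutes. Because every $1$-cell involved is the identity profunctor $\cat C(-,-)$, all the compositors are instances of the unitor isomorphisms of $\Prof$, so the underlying $2$-cell square commutes by the coherence theorem for bicategories applied to $\Prof$; it then remains only to observe that both composites carry the iterated point to $h\circ g\circ f$, which holds because composition in $\cat C$ is associative.

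I expect this associativity step to be the main obstacle: tracking the point through the nested coends and confirming that the density isomorphisms assemble into the associator of $\Prof_*$ is the one place where the pointed structure and the weak composition interact. It is, however, essentially bookkeeping once one notes that the whole computation takes place at the identity $1$-cell, where coherence of $\Prof$ does the heavy lifting. An alternative, more conceptual route would be to recognise $z_{\cat C}$ as the restriction to the locally discrete $\cat C$ of a representable-style construction and to deduce pseudofunctoriality from the universal property of the identity $1$-cells; I would keep the direct verification as the primary strategy and mention this only as a remark.
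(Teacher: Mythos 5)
Your proposal is correct and follows essentially the same route as the paper: both hinge on the coYoneda/coend isomorphism $\int^a \cat C(c,a)\times\cat C(a,e)\xrightarrow{\sim}\cat C(c,e)$, $[n,m]\mapsto m\circ n$, which carries the point $[f,g]$ to $gf$ and hence gives the compositor $z_{\cat C}(g)\circ z_{\cat C}(f)\simeq z_{\cat C}(gf)$, together with the observation that $z_{\cat C}(\id_c)$ is exactly the identity pointed profunctor, so the unitor is an identity. If anything, your write-up is more complete than the paper's: the paper stops after exhibiting the compositor and unitor, whereas you also verify the pentagon and unit coherence axioms, correctly reducing them to coherence in $\Prof$ (since all $1$-cells involved are the identity profunctor) plus associativity of composition in $\cat C$ for the tracking of points.
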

\begin{proof}
We first show that $z_{\cat C}$ preserves composition. Thus let $f : c\rightarrow d$ and $g : d\rightarrow e$ be morphisms in $\cat C$. First, $\cat C(-,-)\circ\cat C(-,-) \simeq \cat C(-,-)$ since the hom-profunctor is the identity profunctor. Observe that such an isomorphism is given by $\int^a \cat C(c,a)\times\cat C(a,e)\xrightarrow{\sim}\cat C(c,e)$ given by $[n,m]\mapsto m\circ n$ (this is well-defined). Thus in particular $[f,g]\mapsto gf$, whence
$$z_{\cat C}(g)\circ z_{\cat C}(f) = (\cat C(-,-), g)\circ (\cat C(-,-), f) \simeq (\cat C(-,-), gf) = z_{\cat C}(gf).$$

From the above it follows that $\id_{c} : (\cat C, c)\rightarrow (\cat C, c)$ is the identity on $(\cat C, c)$, so that $z_{\cat C}$ preserves the identities.
\end{proof}

There is a pseudofunctor
$$\times : \Prof_*\times\Prof_*\rightarrow\Prof_*$$
defined as
\begin{itemize}[topsep=0pt,itemsep=-1ex,partopsep=1ex,parsep=1ex]
\item $(C,c)\times (D,d)\coloneqq (C\times D, (c,d))$ on the $0$-cells,
\item $(P,f)\times (Q,g)\coloneqq (P\times Q, (f,g))$ on the $1$-cells,
\item the product of natural transformations on the $2$-cells.
\end{itemize}
Writing $\one$ for the terminal category and $\bullet$ for its unique object, we have the following:
\begin{proposition}
$(\Prof_*,\times,(\one,\bullet))$ is a symmetric monoidal bicategory.
\end{proposition}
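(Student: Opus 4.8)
The plan is to transport the symmetric monoidal bicategory structure from $\Cat$, equipped with its cartesian product, through the covariant embedding $\embedup{-}$, and then to equip each piece of the resulting structure on $\Prof$ with a canonical base point so that it lifts to $\Prof_*$. We already know that $\times$ is a pseudofunctor on $\Prof_*$ and that $\embedup{-}:\Cat^{co}\to\Prof$ is a monoidal $2$-functor (Proposition~\ref{prop:monoidal-2-functors}); moreover the cartesian product on $\Cat$ is symmetric monoidal with unit $\one$, its associator, unitors and symmetry $\mathfrak{s}$ being honest isomorphisms of categories. Writing $a$, $l$, $r$ for the structural isomorphisms, I would take the structural adjoint equivalences of $\Prof_*$ to be the pointed profunctors $(\embedup{a},\id)$, $(\embedup{l},\id)$, $(\embedup{r},\id)$ and $(\embedup{\mathfrak{s}},\id)$, in each case pointed by the identity morphism --- exactly as the generators are pointed in the interpretation $\mathcal I$ of Section~\ref{sec:semantics}. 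For instance the associator component at $(((\cat A,x),(\cat B,y)),(\cat C,z))$ is pointed by $\id_{(x,(y,z))}\in\embedup{a}(((x,y),z),(x,(y,z)))=(\cat A\times(\cat B\times\cat C))((x,(y,z)),(x,(y,z)))$, which exists precisely because $a$ reassociates objects on the nose. Since each of $a,l,r,\mathfrak{s}$ is an isomorphism, $\embedup{-}\simeq\embeddn{-}$ on it, so these are genuine adjoint equivalences in $\Prof_*$.

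Next I would supply the invertible modifications --- the pentagonator, the three unit modifications, the two hexagonators for the braiding, and the syllepsis or symmetry $2$-cell --- as the pointed versions of the corresponding coherence $2$-cells of $\Prof$, which in turn arise as images under $\embedup{-}$ of the coherence isomorphisms of the cartesian structure on $\Cat$. The only remaining verifications are then about base points: first, that composing the pointed structural $1$-cells yields the prescribed base point; and second, that each structural $2$-cell preserves base points in the sense of the definition of $\Prof_*$, i.e.\ sends the chosen base point of its domain to that of its codomain. Both reduce to the behaviour of identities under coend composition of profunctors: as computed in the proof of Proposition~\ref{prop:embed-omega}, the composition coend identifies $[\id,\id]$ with $\id$, so all base points that occur are canonical identities and are matched on the nose.

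Finally, I would invoke the following reduction. Every coherence axiom of a symmetric monoidal bicategory is an equality between two pasting composites of the $2$-cells above. Because a $2$-cell of $\Prof_*$ is a $2$-cell of $\Prof$ satisfying an additional base-point condition, and two such are equal exactly when they are equal as $2$-cells of $\Prof$, each axiom holds in $\Prof_*$ as soon as (a) it holds in $\Prof$ --- which it does, since $\times$ equips $\Prof$ with a symmetric monoidal structure inherited from the cartesian structure on $\Cat$ --- and (b) the $2$-cells involved preserve base points, which is exactly what the previous step establishes.

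I expect the main obstacle to be bookkeeping rather than anything conceptual: assembling the full list of data and coherence conditions for a symmetric monoidal bicategory and confirming that the canonical-identity base points are compatible with composition via the coend. The single genuinely delicate point is checking that the base point of a composite structural $1$-cell agrees with the one induced by coend composition of pointed profunctors; once the identity-composition computation of Proposition~\ref{prop:embed-omega} is in hand this is automatic, and no coherence data beyond that already present in $\Prof$ (equivalently, in $\Cat$) needs to be verified.
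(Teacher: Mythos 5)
The paper never proves this proposition: it sits at the end of Appendix~\ref{sec:profunctors}, which announces its results ``mostly without proof'', so there is no official argument to compare yours against and I can only judge the proposal on its own terms. Your overall strategy is the natural one and is essentially sound: point the structural cells of the known symmetric monoidal structure on $(\Prof,\times)$ by identities (possible because the structural isomorphisms of $\Cat$ act on objects in the way you describe), and then use the fact that a $2$-cell of $\Prof_*$ is just a $2$-cell of $\Prof$ subject to a base-point condition, with equality detected in $\Prof$, so that every coherence axiom is inherited once all the \emph{data} lifts. That final reduction is valid, since vertical and horizontal pasting of base-point-preserving transformations is again base-point-preserving.

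There is, however, one genuine gap. A pseudonatural transformation does not consist only of $1$-cell components: it also carries an invertible constraint $2$-cell for \emph{every} $1$-cell of the source bicategory, i.e.\ here for every tuple of arbitrary pointed profunctors, and these cells must themselves be $2$-cells of $\Prof_*$. Your base-point verification (``all base points that occur are canonical identities and are matched on the nose'') covers the modifications --- pentagonator, unitor modifications, hexagonators, syllepsis --- whose components sit over structural $1$-cells pointed by identities, but it is false for the pseudonaturality constraints. For the associator at $\bigl((P,p),(Q,q),(R,r)\bigr)$ the two composites being compared are pointed by $\bigl[((p,q),r),\,\id\bigr]$ and $\bigl[\id,\,(p,(q,r))\bigr]$, and for the braiding at $\bigl((P,p),(Q,q)\bigr)$ by $\bigl[(p,q),\,\id\bigr]$ and $\bigl[\id,\,(q,p)\bigr]$; none of these are identities. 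The check does succeed --- under the co-Yoneda reduction that collapses the representable leg of each composition coend, both classes are carried to the same element (the evident reassociation, resp.\ swap, of the points), so the constraint cells of $\Prof$ do preserve base points --- but this is a different computation from the $[\id,\id]\mapsto\id$ one you invoke, and it is precisely the place where non-identity base points enter; as written, your argument omits it. A smaller caveat: since $\embedup{-}$ is locally fully faithful but far from (locally) essentially surjective, the symmetric monoidal structure on $\Prof$ cannot literally be obtained by ``transport'' along it --- its constraint cells at non-representable profunctors must be constructed directly as coend manipulations --- so that step should be phrased as citing the known structure on $\Prof$ rather than deriving it. With the pseudonaturality case added, your proof is complete.
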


\section{Semantic Properties of Layered Props}\label{sec:proofs}

We discuss the properties that the interpretation functor $\mathcal I:\mathcal L(\Omega)\rightarrow\Prof_*$ has. Throughout the section, we assume that $\Omega$ is a system of monoidal categories.

We begin by proving that $\mathcal I$ is indeed a pointed profunctor model (Proposition~\ref{prop:monoidal-prof-model}).
\begin{proof}[Proof of Proposition~\ref{prop:monoidal-prof-model}]
The equalities of morphisms for each category $\omega\in\Omega$ are preserved by Proposition~\ref{prop:embed-omega}. The unit and counit maps in Figure~\ref{fig:axioms-pants2} are preserved and the triangle equalities for them hold since we have defined each pair of profunctors as an adjoint pair. Since all the internal morphisms are identities, there is nothing to show for the composition of the internal morphisms.

All the rules in Figure~\ref{fig:axioms-pants3} follow from the fact that each category and functor in $\Omega$ is monoidal and that both $\embedup -$ and $\embeddn -$ are monoidal $2$-functors (Proposition~\ref{prop:monoidal-2-functors}). For example, by (strict) associativity we have that $\otimes(\id\times\otimes)=\otimes(\otimes\times\id)$ in $\Cat$. We get the desired equations by applying the embeddings:
  \begin{align*}
    \embedup{\otimes}\circ(\embedup{\otimes}\times\id) = \embedup{\otimes}\circ(\id\times\embedup{\otimes})\quad &\texttt{assoc}, \\
    (\id\times\embeddn{\otimes})\circ\embeddn{\otimes} = (\embeddn{\otimes}\times\id)\circ\embeddn{\otimes}\quad &\texttt{coassoc}.
  \end{align*}

  It remains to show that the rules in Figure~\ref{fig:axioms-pants1} are preserved. These are the only rules with a non-trivial internal structure. Observe that all these rules are either of the form
  $$(\embedup f,\id_{f\beta})\circ\sigma\simeq f\sigma\circ (\embedup f,\id_{f\alpha})\qquad\textrm{ or }\qquad\sigma\circ (\embeddn f, \id_{f\beta})\simeq (\embeddn f, \id_{f\alpha})\circ f\sigma$$
  for some functor $f:\omega\rightarrow\tau$ and some morphism $\sigma:\alpha\rightarrow\beta$. We show the isomorphism on the left. First, at the level of profunctors the isomorphism holds since hom-functors are the identities. It remains to show that $[\sigma,\id_{f\beta}]\sim[\id_{f\alpha},f\sigma]$ under this isomorphism. To this end, note that the left-hand side evaluates via the isomorphism
  $$\int^{\gamma\in\omega}\omega(\alpha,\gamma)\times\omega(f\gamma,f\beta)\simeq\omega(f\alpha,f\beta)\qquad [h,k]\mapsto k\circ fh$$
  to $f\sigma$. Similarly, the right-hand side evaluates via the isomorphism
  $$\int^{\gamma\in\tau}\omega(f\alpha,\gamma)\times\omega(\gamma,f\beta)\simeq\omega(f\alpha,f\beta)\qquad [h,k]\mapsto k\circ h$$
  also to $f\sigma$, whence the desired identification follows. The argument for the rules of the second form is dual.
\end{proof}

The following proposition shows that we can detect properties of monoidal categories in a layered prop. This observation is not relevant for the examples that we discuss in this work, yet it is important for the development of the general theory of layered props.
\begin{proposition}
If both $\tau,\omega\in\Omega$ are monoidal closed ({\em resp. coclosed}) and $f:\omega\rightarrow\tau$ in $\Omega$ is also monoidal closed ({\em resp. coclosed}), then the interpretation $\mathcal I$ preserves the $2$-cell {\normalfont\texttt{C}} ({\em resp.} {\normalfont\texttt{coC}}) in Figure~\ref{fig:pants-extra-conditions}.
\end{proposition}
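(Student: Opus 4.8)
The plan is to follow the template of the proof of Proposition~\ref{prop:monoidal-prof-model}: translate the 2-cell \texttt{C} through $\mathcal I$ into a claimed isomorphism of pointed profunctors, reduce the underlying profunctor part to a statement in $\Cat$ that is guaranteed by the closedness hypotheses, and finally match the distinguished elements by a coend computation. Since \texttt{C} is an axiom (Figure~\ref{fig:pants-extra-conditions}) relating the pants and copants to refinement and coarsening, its image under $\mathcal I$ will be a composite built from $\embedup{\otimes_{\omega}}$, $\embeddn{\otimes_{\omega}}$, $\embedup f$ and $\embeddn f$ together with identity pointings, exactly as the rules of Figure~\ref{fig:axioms-pants1} are. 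So the first step is to write out both the domain and codomain $1$-cells of \texttt{C} explicitly as such composites, separating the underlying profunctor from the chosen element.

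The key observation is that monoidal closedness of $\omega$ is exactly the statement that, for each object $a$, the functor $-\otimes_{\omega} a$ admits a right adjoint $[a,-]$ in $\Cat$; dually, coclosedness supplies left adjoints. For any adjunction $L\dashv R$ in $\Cat$ one has $\embedup L\simeq\embeddn R$ in $\Prof$, since the hom-isomorphism $\cat D(Lc,d)\cong\cat C(c,Rd)$ furnishes precisely an isomorphism of the two hom-profunctors. I would use the closed structure to rewrite the embedding of $\otimes_{\omega}$ appearing in \texttt{C} in terms of the internal hom, and then invoke that $\embedup -$ and $\embeddn -$ are monoidal $2$-functors (Proposition~\ref{prop:monoidal-2-functors}) to reorganise these into the composite demanded by the other side of \texttt{C}. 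The hypothesis that $f:\omega\to\tau$ is monoidal closed enters here: it supplies the natural isomorphism $f[a,b]\cong[fa,fb]$, which under the embeddings becomes the Beck--Chevalley-type isomorphism identifying the two ways of first refining and then forming internal homs. Combining these, the underlying profunctors of the two sides of \texttt{C} agree up to canonical isomorphism, and the coclosed case is dual throughout.

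It then remains to check that the distinguished elements are carried to one another by this isomorphism. As in the proof of Proposition~\ref{prop:monoidal-prof-model}, the pointings are identities or canonical structure maps living in the relevant hom-sets, so the verification amounts to tracing an element through coend isomorphisms of the shape $\int^{\gamma\in\omega}\omega(\alpha,\gamma)\times\omega(f\gamma,-)\simeq\omega(f\alpha,-)$ and confirming that both pointings evaluate to the same morphism. This is the routine part of the argument.

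The main obstacle I anticipate lives at the level of $\Cat$ rather than $\Prof$: closedness is a property of the two-variable functor $\otimes_{\omega}$ phrased via partial adjoints and a comparison map $f[a,b]\to[fa,fb]$, not a bare equality of functors as in the associativity argument of Proposition~\ref{prop:monoidal-prof-model}. The crux is therefore to pin down exactly which adjunction and which comparison the closed hypotheses provide, to check that \texttt{C} is the mate of this comparison under the embeddings, and to confirm that invertibility of the comparison (strong closedness, rather than merely lax) is precisely what makes the image $2$-cell an \emph{isomorphism} of pointed profunctors. Once that correspondence is set up, passing through the monoidal embeddings and matching pointings is mechanical.
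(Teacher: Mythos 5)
Your plan is sound and, once unwound, it has the same mathematical content as the paper's proof, but packaged more abstractly. The paper shows that $\mathcal I$ sends the two sides of \texttt{C} to isomorphic profunctors, namely
$$\embedup{\otimes}\circ(\embeddn f\times\id)\ \simeq\ \embeddn f\circ\embedup{\otimes}\circ(\id\times\embedup f),$$
by evaluating both composites on a generic triple of objects $(D,C,C')$ and chaining coend isomorphisms: co-Yoneda collapses the right-hand side to $\tau(D\otimes fC,fC')$, while the left-hand side $\int^{A\in\omega}\tau(D,fA)\times\omega(A\otimes C,C')$ is reduced using precisely your three ingredients in sequence --- the hom-tensor adjunction in $\omega$, the invertible comparison $f[C,C']\cong[fC,fC']$, and the hom-tensor adjunction in $\tau$. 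Where you differ is organisation: you want to remain at the level of $\Prof$, invoking $\embedup L\simeq\embeddn R$ for an adjunction $L\dashv R$ together with a mate calculation, whereas the paper never uses these abstract facts and simply computes. The pointwise computation is what dissolves the difficulty you correctly isolate as the crux: since $[-,-]$ is mixed-variance and the adjunction $-\otimes C\dashv[C,-]$ is parameterised in $C$, the profunctor $\embedup{\otimes}$ is not $\embeddn G$ for any single functor $G$, so your abstract rewriting step would need two-variable mates set up carefully (or would descend to pointwise coends anyway, converging with the paper); in the paper this issue reduces to the isomorphism $\omega(A\otimes C,C')\cong\omega(A,[C,C'])$ being natural in all three variables, which is what the closing remark ``all isomorphisms are natural'' is carrying. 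Your emphasis on strong rather than merely lax closedness of $f$ is exactly right: invertibility of the comparison is used in the step $\tau(D,f[C,C'])\simeq\tau(D,[fC,fC'])$. Finally, your last step --- tracing the distinguished elements through the coend isomorphisms --- is a check the printed proof silently omits; it is genuinely required for a $2$-cell of $\Prof_*$, and it is routine (all pointings involved are equivalence classes of identities), so on this point your plan is more complete than the paper's own argument.
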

\begin{proof}
  For \texttt{C}, we have to show that
  $$\embedup{\otimes}\circ(\embeddn f\times\id)\simeq\embeddn f\circ\embedup{\otimes}\circ(\id\times\embedup f).$$
  Both profunctors are of the type $\tau\times\omega\srarrow\omega$. Let us compute both sides on the triple of objects $(D,C,C')$. The right-hand side computes to
  \begin{align*}
    \int^{B,E\in\tau}\tau(fC,B)\times\tau(D\otimes B,E)\times\tau(E,FC') \simeq \tau(D\otimes FC,FC'),
  \end{align*}
  while in order to reduce the left-hand side we use the monoidal closed structure:
  \begin{align*}
    \int^{A\in\omega}\tau(D,FA)\times\omega(A\otimes C, C') &\simeq \int^{A\in\omega}\tau(D,FA)\times\omega(A, [C,C']) \\
                                                              &\simeq\tau(D, F[C,C']) \\
                                                              &\simeq\tau(D, [FC,FC']) \\
                                                              &\simeq\tau(D\otimes FC, FC').
  \end{align*}
  Since these agree and all isomorphisms are natural, we have the desired isomorphism. The argument for \texttt{coC} is dual, using that the categories and functors are monoidal coclosed.
\end{proof}

\begin{figure}[h]
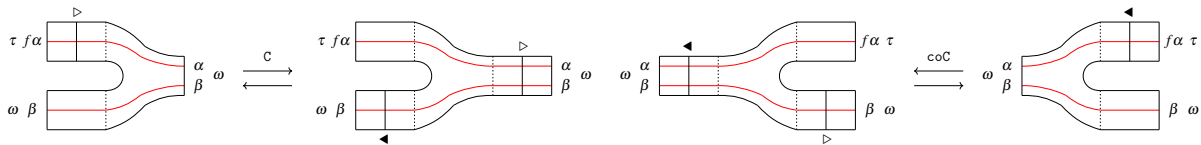

    \centering
    \resizebox{\textwidth}{!}{
        \tikzfig{pants-extra-conditions}
    }
    \caption{Monoidal (co)closure equations.\label{fig:pants-extra-conditions}}
\end{figure}

\end{document}